\tikzstyle{env}=[copoint,regular polygon rotate=0,minimum width=0.2cm, fill=black]
\tikzstyle{every picture}=[baseline=-0.25em]
\tikzstyle{dotpic}=[scale=0.5]
\tikzstyle{diredges}=[every to/.style={diredge}]
\tikzstyle{dot graph}=[shorten <=-0.1mm,shorten >=-0.1mm,scale=0.6]
\tikzstyle{plot point}=[circle,fill=black,minimum width=2mm,inner sep=0]
\tikzstyle{braceedge}=[decorate,decoration={brace,amplitude=2mm,raise=-1mm}]
\tikzstyle{small braceedge}=[decorate,decoration={brace,amplitude=1mm,raise=-1mm}]
\tikzstyle{left hook arrow}=[left hook-latex]
\tikzstyle{right hook arrow}=[right hook-latex]
\tikzstyle{dtriangle}=[fill=yellow,draw=black,shape=isosceles triangle,shape border rotate=-90,isosceles triangle stretches=true,inner sep=1pt,minimum width=0.4cm,minimum height=3mm]
\tikzstyle{vtriang}=[fill=yellow,draw=black,shape=isosceles triangle,shape border rotate=180,isosceles triangle stretches=true,inner sep=1pt,minimum width=0.4cm,minimum height=3mm]
\tikzstyle{triangle}=[fill=yellow,draw=black,shape=isosceles triangle,shape border rotate=90,isosceles triangle stretches=true,inner sep=1pt,minimum width=0.4cm,minimum height=3mm]
\tikzstyle{H box}=[rectangle,fill=yellow,draw=black,xscale=1.0,yscale=1.0, inner sep=1.pt]
\tikzstyle{gbox}=[rectangle,fill=green,draw=black,xscale=1.0,yscale=1.0, inner sep=1.pt]
\tikzstyle{rbox}=[rectangle,fill=red,draw=black,xscale=1.0,yscale=1.0, inner sep=1.pt]
\tikzstyle{bn}=[circle,fill=black,draw=black,scale=.4]
\tikzstyle{wn}=[circle,fill=white,draw=black,scale=.6]
\tikzstyle{dn}=[circle,fill=none,draw=gray]
\tikzstyle{black dot}=[inner sep=0.7mm,minimum width=0pt,minimum height=0pt,fill=black,draw=black,shape=circle]
\tikzstyle{dot}=[black dot]
\tikzstyle{smalldot}=[inner sep=0.4mm,minimum width=0pt,minimum height=0pt,fill=black,draw=black,shape=circle]
\tikzstyle{white dot}=[dot,fill=white]
\tikzstyle{antipode}=[white dot,inner sep=0.3mm,font=\footnotesize]
\tikzstyle{smallwhitedot}=[smalldot,fill=white]
\tikzstyle{alt white dot}=[white dot,label={[xshift=3.07mm,yshift=-0.05mm,font=\footnotesize]left:$*$}]
\tikzstyle{gray dot}=[dot,fill=gray!40!white]
\tikzstyle{smallgraydot}=[smalldot,fill=gray!40!white]
\tikzstyle{box vertex}=[draw=black,rectangle]
\tikzstyle{small box}=[box vertex,fill=white]
\tikzstyle{whitebg}=[fill=white,inner sep=2pt]
\tikzstyle{graph state vertex}=[sg vertex,fill=black]
\tikzstyle{wide copoint}=[fill=white,draw=black,shape=isosceles triangle,shape border rotate=90,isosceles triangle stretches=true,inner sep=1pt,minimum width=1.5cm,minimum height=5mm]
\tikzstyle{wide point}=[fill=white,draw=black,shape=isosceles triangle,shape border rotate=-90,isosceles triangle stretches=true,inner sep=1pt,minimum width=1.5cm,minimum height=4mm]
\tikzstyle{very wide copoint}=[fill=white,draw=black,shape=isosceles triangle,shape border rotate=-90,isosceles triangle stretches=true,inner sep=1pt,minimum width=2.5cm,minimum height=4mm]
\tikzstyle{very wide empty copoint}=[draw=black,shape=isosceles triangle,shape border rotate=-90,isosceles triangle stretches=true,inner sep=1pt,minimum width=2.5cm,minimum height=4mm]
\tikzstyle{symm}=[ultra thick,shorten <=-1mm,shorten >=-1mm]
\tikzstyle{square box}=[rectangle,fill=white,draw=black,minimum height=5mm,minimum width=5mm,font=\small]
\tikzstyle{square gray box}=[rectangle,fill=gray!30,draw=black,minimum height=6mm,minimum width=6mm]
\tikzstyle{copoint}=[regular polygon,regular polygon sides=3,draw=black,scale=0.75,inner sep=-0.5pt,minimum width=7mm,fill=white]
\tikzstyle{point}=[regular polygon,regular polygon sides=3,draw=black,scale=0.75,inner sep=-0.5pt,minimum width=7mm,fill=white,regular polygon rotate=180]
\tikzstyle{gray point}=[point,fill=gray!40!white]
\tikzstyle{gray copoint}=[copoint,fill=gray!40!white]
\newcommand{\edgearrow}{{\arrow[black]{>}}}
\newcommand{\edgetick}{{\arrow[black,scale=0.7,very thick]{|}}}
\tikzstyle{diredge}=[->]
\tikzstyle{rdiredge}=[<-]
\tikzstyle{medium diredge}=[->]
\tikzstyle{short diredge}=[->]
\tikzstyle{halfedge}=[-)]
\tikzstyle{other halfedge}=[(-]
\tikzstyle{freeedge}=[(-)]
\tikzstyle{white edge}=[line width=5pt,white]
\tikzstyle{tick}=[postaction=decorate,decoration={markings, mark=at position 0.5 with \edgetick}]
\tikzstyle{small map edge}=[|-latex, gray!60!blue, shorten <=0.9mm, shorten >=0.5mm]
\tikzstyle{thick dashed edge}=[very thick,dashed,gray!40]
\tikzstyle{map edge}=[|-latex,very thick, gray!40, shorten <=1mm, shorten >=0.5mm]
\tikzstyle{tickedge}=[postaction=decorate,
\tikzstyle{dirtickedge}=[postaction=decorate,
\tikzstyle{dirdoubletickedge}=[postaction=decorate,
\newcommand{\boxshape}[3]{%
\pgfdeclareshape{#1}{
\inheritsavedanchors[from=rectangle] 
\inheritanchorborder[from=rectangle]
\inheritanchor[from=rectangle]{center}
\inheritanchor[from=rectangle]{north}
\inheritanchor[from=rectangle]{south}
\inheritanchor[from=rectangle]{west}
\inheritanchor[from=rectangle]{east}
\backgroundpath{
\southwest \pgf@xa=\pgf@x \pgf@ya=\pgf@y
\northeast \pgf@xb=\pgf@x \pgf@yb=\pgf@y

\@tempdima=#2
\@tempdimb=#3

\pgfpathmoveto{\pgfpoint{\pgf@xa - 5pt + \@tempdima}{\pgf@ya}}
\pgfpathlineto{\pgfpoint{\pgf@xa - 5pt - \@tempdima}{\pgf@yb}}
\pgfpathlineto{\pgfpoint{\pgf@xb + 5pt + \@tempdimb}{\pgf@yb}}
\pgfpathlineto{\pgfpoint{\pgf@xb + 5pt - \@tempdimb}{\pgf@ya}}
\pgfpathlineto{\pgfpoint{\pgf@xa - 5pt + \@tempdima}{\pgf@ya}}
\pgfpathclose
}
}}
\tikzstyle{map}=[draw,shape=NEbox,inner sep=7pt]
\tikzstyle{mapdag}=[draw,shape=SEbox,inner sep=7pt]
\tikzstyle{maptrans}=[draw,shape=SWbox,inner sep=7pt]
\tikzstyle{mapconj}=[draw,shape=NWbox,inner sep=7pt]
\tikzstyle{probs}=[shape=semicircle,fill=gray!40!white,draw=black,shape border rotate=180,minimum width=1.2cm]
\tikzstyle{arrs}=[-latex,font=\small,auto]
\tikzstyle{arrow plain}=[arrs]
\tikzstyle{arrow dashed}=[dashed,arrs]
\tikzstyle{arrow bold}=[very thick,arrs]
\tikzstyle{arrow hide}=[draw=white!0,-]
\tikzstyle{arrow reverse}=[latex-]
\tikzstyle{cdnode}=[]
\tikzstyle{gn}=[dot,fill=green,minimum width=0.3cm,inner sep=0pt]
\tikzstyle{rn}=[dot,fill=red,inner sep=0pt,minimum width=0.3cm]
\tikzstyle{rc}=[dot,thick,fill=white,draw = red,minimum width=0.3cm,inner sep=0pt]
\tikzstyle{gc}=[dot,thick,fill=white,draw= green,inner sep=0pt,minimum width=0.3cm]
\tikzstyle{bc}=[dot,thick,fill=white,draw= blue,minimum width=0.3cm]
\tikzstyle{label}=[circle,fill=white,minimum width=0.3cm]
\tikzstyle{clocklabel}=[dot,fill=yellow,draw=black,font=\tiny,inner sep=0.75pt]
\tikzstyle{rsn}=[circle split,draw,fill=red,font=\tiny,inner sep=0.75pt]
\tikzstyle{gsn}=[circle split,draw,fill=green,font=\tiny,inner sep=0.75pt]
\tikzstyle{bsn}=[circle split,draw,fill=blue,font=\tiny,inner sep=0.75pt]
\tikzstyle{rsc}=[circle split,thick,draw= red,draw,fill=white,font=\tiny,inner sep=0.75pt]
\tikzstyle{gsc}=[circle split,thick,draw= green,draw,fill=white,font=\tiny,inner sep=0.75pt]
\tikzstyle{bsc}=[circle split,thick,draw= blue,draw,fill=white,font=\tiny,inner sep=0.75pt]
\tikzstyle{cnot}=[fill=white,shape=circle,inner sep=-1.4pt]
\tikzstyle{wire label}=[font=\tiny, auto]
\newcommand{\bra}[1]{\ensuremath{\left\langle #1 \right|}}
\newcommand{\ket}[1]{\ensuremath{\left|  #1 \right\rangle}}
\tikzstyle{cdiag}=[matrix of math nodes, row sep=3em, column sep=3em, text height=1.5ex, text depth=0.25ex,inner sep=0.5em]
\tikzstyle{arrow above}=[transform canvas={yshift=0.5ex}]
\tikzstyle{arrow below}=[transform canvas={yshift=-0.5ex}]
\newtheorem{Th}{Theorem}[section]
\newtheorem{theorem}[Th]{Theorem}
\newtheorem{proposition}[Th]{Proposition}
\newtheorem{lemma}[Th]{Lemma}
\newenvironment{proof}{\textbf{Proof:}}{\hfill$\Box$\newline}
\title{ Completeness of the ZX-calculus for Pure Qubit Clifford+T Quantum Mechanics}
\author{Kang Feng Ng \qquad\qquad Quanlong Wang\\ Department of Computer Science, University of Oxford }
\begin{document}

\date{}\maketitle

\begin{abstract}
Recently, we gave a complete axiomatisation of the ZX-calculus for the overall pure qubit quantum mechanics. Based on this result, here we also obtain a complete axiomatisation of the ZX-calculus for the Clifford+T quantum mechanics
by restricting the ring of complex numbers to its subring corresponding to the Clifford+T fragment resting on the completeness theorem of the ZW-calculus for arbitrary commutative ring. In contrast to the first complete axiomatisation of the ZX-calculus for the Clifford+T fragment, we have two new generators as features rather than novelties: the triangle can be employed as an essential component to construct a Toffoli gate in a very simple form, while the  $\lambda$ box can be slightly extended to a generalised phase so that the generalised supplementarity (cyclotomic supplementarity ) is naturally seen as a special case of the generalised spider rule.


\end{abstract}

\section{Introduction}
Clifford+T qubit quantum mechanics (QM) is an approximatively universal fragment of QM, which has been widely used in quantum computing. In contrast to the traditional way of matrix calculation,  the ZX-calculus introduced by Coecke and Duncan \cite{CoeckeDuncan}  is a diagrammatical axiomatisation of quantum computing.  One of the main open problems of the ZX-calculus  is to give a complete axiomatisation for the Clifford+T QM \cite{cqmwiki}.
After the first completeness result on this fragment
--the completeness of the ZX-calculus for single qubit Clifford+T QM \cite{Miriam1ct}, there finally comes an completion of the ZX-Calculus for the whole Clifford+T QM  \cite{Emmanuel}, 
which contributes a solution to the above mentioned open problem.

Further to the complete axiomatisation of the ZX-Calculus for the Clifford+T fragment QM, we have given a complete axiomatisation of the ZX-calculus for the overall pure qubit QM \cite{ngwang}. In this paper, we first simplify the rule of addition (AD)  and show that some rules can be derived from other rules in \cite{ngwang}. Then we obtain a complete axiomatisation of the ZX-calculus for the Clifford+T quantum mechanics
by restricting the ring of complex numbers to its subring $\mathbb{Z}[i, \frac{1}{\sqrt{2}}]$ based on the completeness theorem of the ZW-calculus for arbitrary commutative ring \cite{amar}.  In comparison to the completeness proof in  \cite{ngwang},  a modification of the interpretation from the ZW-calculus to the ZX-calculus is made here. 

The main difference between the two complete axiomatisations of the ZX-Calculus for the Clifford+T fragment QM shown in this paper and that presented in \cite{Emmanuel} is as follows:
\begin{enumerate}
\item  Although the number of rules (which is 30) listed here is much more than that of \cite{Emmanuel} (which is 13), the number of nodes in each non-scalar diagram of the extended part of rules (non-stablizer part) is at most 8 in this paper, in contrast to a maximum of 17 in \cite{Emmanuel}. 
 
\item Following \cite{ngwang}, we have still introduced two more generators-- the triangle and the $\lambda$ box-- in this paper, while there are only green nodes and red nodes as generators  in \cite{Emmanuel}. Our new generators are features rather than novelties: the triangle can be employed as an essential component to construct a Toffoli gate in a very simple form, while the  $\lambda$ box can be slightly extended to a generalised phase so that the generalised supplementarity (also called cyclotomic supplementarity, with supplementarity as a special case) \cite{jpvw} is naturally seen as a special case of the generalised spider rule. These features are explained in detail in section \ref{zxforct}. 

\item The translation from the ZX-calculus to the ZW-calculus in our paper is more direct . 
 \end{enumerate}

\section{Universal completion of the ZX-calculus}
In the following we list the rules of the ZX-calculus for the overall pure qubit QM as shown in \cite{ngwang}:

\begin{figure}[!h]
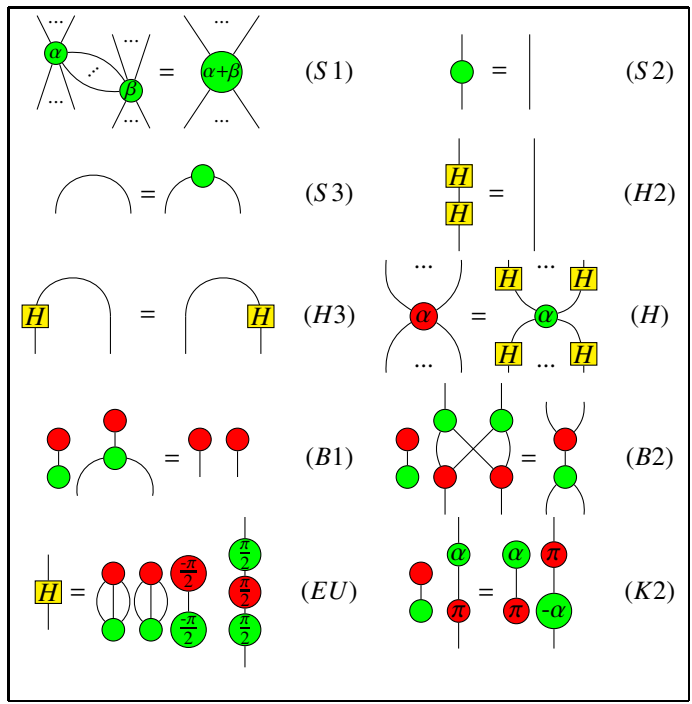

\begin{center}
\[
\quad \qquad\begin{array}{|cccc|}
\hline
\beginpgfgraphicnamed{diagrams//spiderlt}
\InputIfFileExists{diagrams//spiderlt.tikz}{}{\input{./figures/diagrams//spiderlt.tikz}}
\endpgfgraphicnamed=%
\beginpgfgraphicnamed{diagrams//spiderrt}
\InputIfFileExists{diagrams//spiderrt.tikz}{}{\input{./figures/diagrams//spiderrt.tikz}}
\endpgfgraphicnamed &(S1) &%
\beginpgfgraphicnamed{diagrams//s2new}
\InputIfFileExists{diagrams//s2new.tikz}{}{\input{./figures/diagrams//s2new.tikz}}
\endpgfgraphicnamed &(S2)\\
\beginpgfgraphicnamed{diagrams//induced_compact_structure-2wirelt}
\InputIfFileExists{diagrams//induced_compact_structure-2wirelt.tikz}{}{\input{./figures/diagrams//induced_compact_structure-2wirelt.tikz}}
\endpgfgraphicnamed=%
\beginpgfgraphicnamed{diagrams//induced_compact_structure-2wirert}
\InputIfFileExists{diagrams//induced_compact_structure-2wirert.tikz}{}{\input{./figures/diagrams//induced_compact_structure-2wirert.tikz}}
\endpgfgraphicnamed&(S3) & %
\beginpgfgraphicnamed{diagrams//hsquare}
\InputIfFileExists{diagrams//hsquare.tikz}{}{\input{./figures/diagrams//hsquare.tikz}}
\endpgfgraphicnamed &(H2)\\
\beginpgfgraphicnamed{diagrams//hslidecap}
\InputIfFileExists{diagrams//hslidecap.tikz}{}{\input{./figures/diagrams//hslidecap.tikz}}
\endpgfgraphicnamed &(H3) &%
\beginpgfgraphicnamed{diagrams//h2newlt}
\InputIfFileExists{diagrams//h2newlt.tikz}{}{\input{./figures/diagrams//h2newlt.tikz}}
\endpgfgraphicnamed=%
\beginpgfgraphicnamed{diagrams//h2newrt}
\InputIfFileExists{diagrams//h2newrt.tikz}{}{\input{./figures/diagrams//h2newrt.tikz}}
\endpgfgraphicnamed&(H)\\
\beginpgfgraphicnamed{diagrams//b1slt}
\InputIfFileExists{diagrams//b1slt.tikz}{}{\input{./figures/diagrams//b1slt.tikz}}
\endpgfgraphicnamed=%
\beginpgfgraphicnamed{diagrams//b1srt}
\InputIfFileExists{diagrams//b1srt.tikz}{}{\input{./figures/diagrams//b1srt.tikz}}
\endpgfgraphicnamed&(B1) & %
\beginpgfgraphicnamed{diagrams//b2slt}
\InputIfFileExists{diagrams//b2slt.tikz}{}{\input{./figures/diagrams//b2slt.tikz}}
\endpgfgraphicnamed=%
\beginpgfgraphicnamed{diagrams//b2srt}
\InputIfFileExists{diagrams//b2srt.tikz}{}{\input{./figures/diagrams//b2srt.tikz}}
\endpgfgraphicnamed&(B2)\\
\beginpgfgraphicnamed{diagrams//HadaDecomSingleslt}
\InputIfFileExists{diagrams//HadaDecomSingleslt.tikz}{}{\input{./figures/diagrams//HadaDecomSingleslt.tikz}}
\endpgfgraphicnamed= %
\beginpgfgraphicnamed{diagrams//HadaDecomSinglesrt}
\InputIfFileExists{diagrams//HadaDecomSinglesrt.tikz}{}{\input{./figures/diagrams//HadaDecomSinglesrt.tikz}}
\endpgfgraphicnamed&(EU)    & %
\beginpgfgraphicnamed{diagrams//k2slt}
\InputIfFileExists{diagrams//k2slt.tikz}{}{\input{./figures/diagrams//k2slt.tikz}}
\endpgfgraphicnamed=%
\beginpgfgraphicnamed{diagrams//k2srt}
\InputIfFileExists{diagrams//k2srt.tikz}{}{\input{./figures/diagrams//k2srt.tikz}}
\endpgfgraphicnamed&(K2)\\

&&&\\ 
\hline
\end{array}\]
\end{center}
  \caption{Standard ZX-calculus rules, where $\alpha, \beta\in [0,~2\pi)$.}\label{figure1}  
  \end{figure}
  
  \begin{figure}[!h]
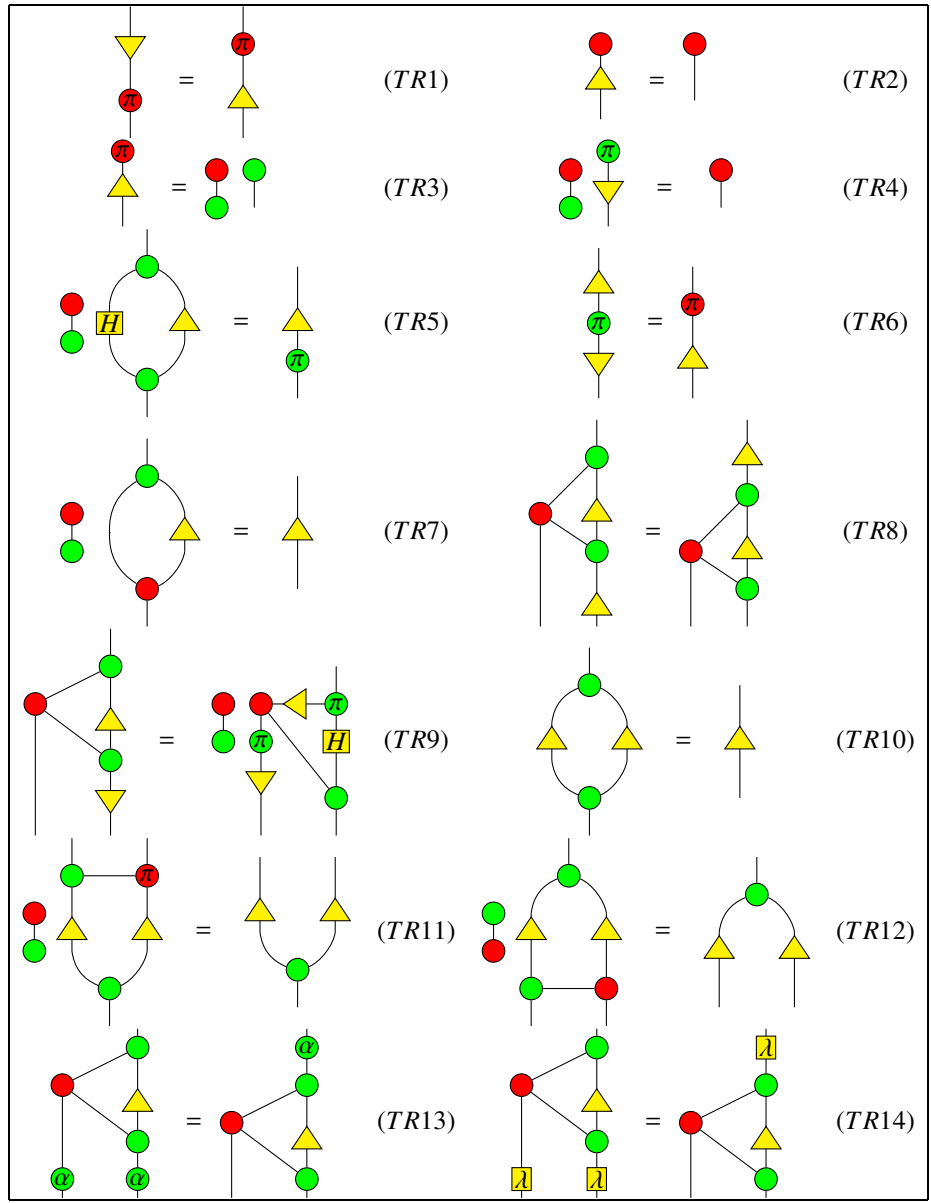

  	\begin{center}
  		\[
  		\quad \qquad\begin{array}{|cccc|}
  		\hline
  		%
\beginpgfgraphicnamed{diagrams//trianglepicommute}
\InputIfFileExists{diagrams//trianglepicommute.tikz}{}{\input{./figures/diagrams//trianglepicommute.tikz}}
\endpgfgraphicnamed &(TR1) &%
\beginpgfgraphicnamed{diagrams//triangleocopy}
\InputIfFileExists{diagrams//triangleocopy.tikz}{}{\input{./figures/diagrams//triangleocopy.tikz}}
\endpgfgraphicnamed &(TR2)\\

\beginpgfgraphicnamed{diagrams//trianglepicopy}
\InputIfFileExists{diagrams//trianglepicopy.tikz}{}{\input{./figures/diagrams//trianglepicopy.tikz}}
\endpgfgraphicnamed&(TR3) & %
\beginpgfgraphicnamed{diagrams//trianglegdpicopy}
\InputIfFileExists{diagrams//trianglegdpicopy.tikz}{}{\input{./figures/diagrams//trianglegdpicopy.tikz}}
\endpgfgraphicnamed &(TR4)\\

\beginpgfgraphicnamed{diagrams//trianglehhopf}
\InputIfFileExists{diagrams//trianglehhopf.tikz}{}{\input{./figures/diagrams//trianglehhopf.tikz}}
\endpgfgraphicnamed &(TR5) &%
\beginpgfgraphicnamed{diagrams//gpiintriangles}
\InputIfFileExists{diagrams//gpiintriangles.tikz}{}{\input{./figures/diagrams//gpiintriangles.tikz}}
\endpgfgraphicnamed&(TR6)\\

\beginpgfgraphicnamed{diagrams//trianglehopf}
\InputIfFileExists{diagrams//trianglehopf.tikz}{}{\input{./figures/diagrams//trianglehopf.tikz}}
\endpgfgraphicnamed&(TR7) & %
\beginpgfgraphicnamed{diagrams//2triangleup}
\InputIfFileExists{diagrams//2triangleup.tikz}{}{\input{./figures/diagrams//2triangleup.tikz}}
\endpgfgraphicnamed&(TR8)\\

\beginpgfgraphicnamed{diagrams//2triangledown}
\InputIfFileExists{diagrams//2triangledown.tikz}{}{\input{./figures/diagrams//2triangledown.tikz}}
\endpgfgraphicnamed&(TR9)    & %
\beginpgfgraphicnamed{diagrams//2trianglehopf}
\InputIfFileExists{diagrams//2trianglehopf.tikz}{}{\input{./figures/diagrams//2trianglehopf.tikz}}
\endpgfgraphicnamed&(TR10)\\

\beginpgfgraphicnamed{diagrams//2triangledeloop}
\InputIfFileExists{diagrams//2triangledeloop.tikz}{}{\input{./figures/diagrams//2triangledeloop.tikz}}
\endpgfgraphicnamed &(TR11) &%
\beginpgfgraphicnamed{diagrams//2triangledeloopnopi}
\InputIfFileExists{diagrams//2triangledeloopnopi.tikz}{}{\input{./figures/diagrams//2triangledeloopnopi.tikz}}
\endpgfgraphicnamed &(TR12)\\

\beginpgfgraphicnamed{diagrams//alphacopyw}
\InputIfFileExists{diagrams//alphacopyw.tikz}{}{\input{./figures/diagrams//alphacopyw.tikz}}
\endpgfgraphicnamed&(TR13) &%
\beginpgfgraphicnamed{diagrams//lambdacopyw}
\InputIfFileExists{diagrams//lambdacopyw.tikz}{}{\input{./figures/diagrams//lambdacopyw.tikz}}
\endpgfgraphicnamed &(TR14)\\

  		\hline
  		\end{array}\]
  	\end{center}
  	
  	\caption{Extended ZX-calculus rules for triangle, where $\lambda  \geq 0, \alpha \in [0,~2\pi).$}\label{figure2}
  \end{figure}

  
  \begin{figure}[!h]
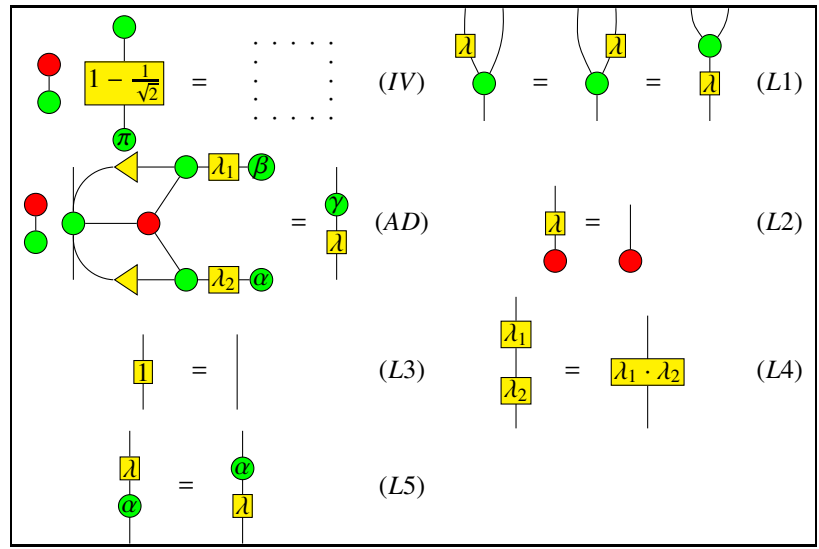

\begin{center}
\[
\quad \qquad\begin{array}{|cccc|}
\hline
\beginpgfgraphicnamed{diagrams//emptyrule}
\InputIfFileExists{diagrams//emptyrule.tikz}{}{\input{./figures/diagrams//emptyrule.tikz}}
\endpgfgraphicnamed &(IV) &%
\beginpgfgraphicnamed{diagrams//lambbranch}
\InputIfFileExists{diagrams//lambbranch.tikz}{}{\input{./figures/diagrams//lambbranch.tikz}}
\endpgfgraphicnamed &(L1)\\

\beginpgfgraphicnamed{diagrams//plus}
\InputIfFileExists{diagrams//plus.tikz}{}{\input{./figures/diagrams//plus.tikz}}
\endpgfgraphicnamed&(AD) &%
\beginpgfgraphicnamed{diagrams//lambdadelete}
\InputIfFileExists{diagrams//lambdadelete.tikz}{}{\input{./figures/diagrams//lambdadelete.tikz}}
\endpgfgraphicnamed &(L2)\\

\beginpgfgraphicnamed{diagrams//sqr1is1}
\InputIfFileExists{diagrams//sqr1is1.tikz}{}{\input{./figures/diagrams//sqr1is1.tikz}}
\endpgfgraphicnamed&(L3) &%
\beginpgfgraphicnamed{diagrams//lambdatimes}
\InputIfFileExists{diagrams//lambdatimes.tikz}{}{\input{./figures/diagrams//lambdatimes.tikz}}
\endpgfgraphicnamed&(L4)\\

\beginpgfgraphicnamed{diagrams//lambdaalpha}
\InputIfFileExists{diagrams//lambdaalpha.tikz}{}{\input{./figures/diagrams//lambdaalpha.tikz}}
\endpgfgraphicnamed&(L5) &&\\
\hline
\end{array}\]
\end{center}
  \caption{Extended ZX-calculus rules for $\lambda$ and addition, where $\lambda, \lambda_1,  \lambda_2 \geq 0, \alpha, \beta, \gamma \in [0,~2\pi);$ in (AD), $\lambda e^{i\gamma} 
  =\lambda_1 e^{i\beta}+ \lambda_2 e^{i\alpha}$.}\label{figure0}  
  \end{figure}


\FloatBarrier
Note that the upside-down versions  of all the above listed rules still hold, thus will be used without being clearly stated.


Now we show that the addition rule (AD) in Figure \ref{figure0} can be simplified.
First, by the symmetry of %
\beginpgfgraphicnamed{diagrams//wstate}
\InputIfFileExists{diagrams//wstate.tikz}{}{\input{./figures/diagrams//wstate.tikz}}
\endpgfgraphicnamed and the rule (TR10), we have
\begin{equation}
\beginpgfgraphicnamed{diagrams//sumsimplify}
\InputIfFileExists{diagrams//sumsimplify.tikz}{}{\input{./figures/diagrams//sumsimplify.tikz}}
\endpgfgraphicnamed
\end{equation}
Therefore,
\begin{equation}
\beginpgfgraphicnamed{diagrams//sumsimplify2}
\InputIfFileExists{diagrams//sumsimplify2.tikz}{}{\input{./figures/diagrams//sumsimplify2.tikz}}
\endpgfgraphicnamed
\end{equation}

As a consequence, we have the following commutativity of addition:
\begin{equation}
\beginpgfgraphicnamed{diagrams//sumcommutativity2}
\InputIfFileExists{diagrams//sumcommutativity2.tikz}{}{\input{./figures/diagrams//sumcommutativity2.tikz}}
\endpgfgraphicnamed
\end{equation}

Next we prove that some rules in Figure \ref{figure2} are derivable. 
\begin{lemma}\label{redundantrules}
The rules (TR4), (TR10), and (TR11) can be derived from other rules.
\end{lemma}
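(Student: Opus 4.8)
The plan is to prove redundancy purely by diagrammatic rewriting: for each of (TR4), (TR10), (TR11) I would exhibit an explicit sequence of rewrites, using only axioms \emph{other} than these three, that turns the left-hand diagram into the right-hand one. Since all three live in the triangle fragment of Figure \ref{figure2}, the natural toolkit is the spider rules (S1), (S2), the bialgebra rules (B1), (B2) (from which Hopf's law is derivable in the usual way), the colour-change rule (H), and the remaining triangle axioms (TR1), (TR2), (TR3), (TR5), (TR6), (TR7), (TR8), (TR9), (TR12), together with the $\lambda$- and addition-rules of Figure \ref{figure0} for any scalar bookkeeping. A first order of business is to fix an order of derivation that avoids circularity, since the statement only guarantees each rule follows from the other rules; I would aim to obtain the two-triangle identities (TR10) and (TR11) from the single-triangle Hopf relations and the stacking rules (TR8), (TR9), and to derive (TR4) independently from the basic $\pi$-copy rule (TR3).

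For (TR11), the two-triangle ``deloop'', the idea is to recognise the self-loop as an instance of a spider comultiplication immediately followed by a multiplication, i.e.\ as the configuration to which Hopf's law applies. I would first use (TR8)/(TR9) to normalise the stacked pair of triangles, then push the spider legs through the triangle using (TR1) and (TR2) so that the loop comes to rest on an ordinary green or red node, and finally collapse it with Hopf's law, absorbing any leftover phase with (TR12) and any scalar factor with (L2)--(L4) and (AD).

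For (TR10), the two-triangle Hopf relation, I would start from the single-triangle Hopf laws (TR5), (TR7): stacking the two triangles with (TR8)/(TR9) should reduce the left-hand side to a single-triangle Hopf configuration up to a phase copied through by (TR3), after which (TR7) finishes the job. For (TR4), the gray/green $\pi$-copy through a triangle, the plan is to commute the $\pi$ past the triangle with (TR1), copy it through the adjacent node using the $\pi$-copy rule (S2)/(TR3), and then restore the triangle orientation with (TR2), invoking (H) to fix the colour of the copied node if needed.

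The hard part will be twofold. First, because the triangle is an \emph{asymmetric} generator---its two legs are not interchangeable---every rewrite must track which leg is input and which is output, and an otherwise-correct rule can fail on orientation alone; getting the leg-bookkeeping right as diagrams pass through the stacking rules (TR8), (TR9) is where most of the care goes. Second, I expect the genuine obstacle to be the scalar and phase arithmetic: removing a loop or a Hopf pair typically leaves behind a nonunit scalar or a residual phase, and matching these against the right-hand side will require a careful pass with the $\lambda$-rules (L2)--(L5) and the addition rule (AD). Confirming that no derivation secretly reuses (TR4), (TR10), or (TR11)---that is, that the chosen order is genuinely acyclic---is the final check.
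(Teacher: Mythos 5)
Your submission is a strategy outline rather than a proof: for a lemma whose entire mathematical content is the explicit rewrite sequences, every load-bearing step is deferred (``I would exhibit\ldots'', ``should reduce\ldots''), and none of the claimed reductions is actually carried out or verified. That is a genuine gap, not a stylistic one, because the plausibility of the individual steps is exactly what is in question. Concretely, the paper's derivation of (TR4) hinges essentially on (TR6), the axiom governing a green $\pi$ enclosed between two triangles: the chain is (TR3), then (TR6), then (TR2). Your proposed route for (TR4) --- commute the $\pi$ with (TR1), copy it with (S2)/(TR3), restore with (TR2), fix colour with (H) --- never invokes (TR6) at all, and it is not evident it can succeed without it, since (TR1) only commutes a $\pi$ past a single triangle and does not by itself produce or eliminate the two-triangle configuration that (TR6) resolves.

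Your routes for the other two rules also diverge from the paper while carrying unverified key steps. The paper derives (TR10) in essentially two moves, from (TR12) and (TR3), and derives (TR11) from (TR12) together with repeated uses of (TR1); no appeal to (TR5), (TR7), the stacking rules (TR8)/(TR9), a Hopf law extracted from (B1)/(B2), or scalar bookkeeping via (L2)--(L5) and (AD) is needed anywhere. Your proposal instead routes (TR10) through stacking onto the single-triangle Hopf rules (TR5)/(TR7), and (TR11) through a spider-Hopf collapse; either might conceivably work, but the pivotal assertions (``stacking \ldots should reduce the left-hand side to a single-triangle Hopf configuration'') are precisely the steps that require explicit diagrams, and you supply none. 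Your instinct to verify acyclicity of the derivation order is sound --- and is trivially satisfied by the paper's choices, which use only (TR1), (TR2), (TR3), (TR6), and (TR12), none of which is among the three rules being derived --- but you do not perform that check either. As it stands, the proposal identifies a reasonable toolkit and flags the right hazards (leg orientation, scalars) without establishing the lemma.
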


\begin{proof}
For the derivation of (TR4), we have

\begin{equation}
\input{diagrams/TR4derive.tikz}
\end{equation}

where we used (TR3) for the second equality, (TR6) for the third equality, and (TR2) for the last equality.

For the derivation of (TR10), we have
\begin{equation}
\beginpgfgraphicnamed{diagrams//TR10derive}
\InputIfFileExists{diagrams//TR10derive.tikz}{}{\input{./figures/diagrams//TR10derive.tikz}}
\endpgfgraphicnamed
\end{equation}
where we used (TR12) for the second equality, (TR3) for the third equality.

For the derivation of (TR11), we have

\beginpgfgraphicnamed{diagrams//TR11derive}
\InputIfFileExists{diagrams//TR11derive.tikz}{}{\input{./figures/diagrams//TR11derive.tikz}}
\endpgfgraphicnamed

where we used (TR12) for the fourth equality and (TR1) several times.
\end{proof}

If we add a new rule (TR$10^{\prime}$) as shown in Figure \ref{figure2t}, then the rule TR (5) is also derivable.  In fact, 

\beginpgfgraphicnamed{diagrams//TR5derive}
\InputIfFileExists{diagrams//TR5derive.tikz}{}{\input{./figures/diagrams//TR5derive.tikz}}
\endpgfgraphicnamed

where for the third equality we used the following diagrammatic reasoning via rules (TR1), (K2), (TR9) and (TR$10^{\prime}$) :

\beginpgfgraphicnamed{diagrams//TR5derive2}
\InputIfFileExists{diagrams//TR5derive2.tikz}{}{\input{./figures/diagrams//TR5derive2.tikz}}
\endpgfgraphicnamed

\section{ZX-calculus for Clifford+T quantum mechanics}

The ZX-calculus for Clifford+T quantum mechanics is a compact closed category $\mathfrak{C}$. The objects of $\mathfrak{C}$ are natural numbers: $0, 1, 2,  \cdots$; the tensor of objects is just addition of numbers: $m \otimes n = m+n$. The morphisms of $\mathfrak{C}$ are diagrams of the ZX-calculus. A general diagram  $D:k\to l$   with $k$ inputs and $l$ outputs is generated by:
\begin{center} 
\begin{tabular}{|r@{~}r@{~}c@{~}c|r@{~}r@{~}c@{~}c|}
\hline
$R_Z^{(n,m)}$&$:$&$n\to m$ & %
\beginpgfgraphicnamed{diagrams//generator_spider}
\InputIfFileExists{diagrams//generator_spider.tikz}{}{\input{./figures/diagrams//generator_spider.tikz}}
\endpgfgraphicnamed & $A$&$:$&$ 1\to 1$& %
\beginpgfgraphicnamed{diagrams//alphagate}
\begin{tikzpicture}
	\begin{pgfonlayer}{nodelayer}
		\node [style=none] (0) at (0, -0.5) {};
		\node [style=none] (1) at (0, 0.5) {};
		\node [style=gn] (2) at (0, 0) {$\alpha$};
	\end{pgfonlayer}
	\begin{pgfonlayer}{edgelayer}
		\draw (1.center) to (0.center);
	\end{pgfonlayer}
\end{tikzpicture}}
\endpgfgraphicnamed\\
\hline
$H$&$:$&$1\to 1$ &%
\beginpgfgraphicnamed{diagrams//HadaDecomSingleslt}
\begin{tikzpicture}
	\begin{pgfonlayer}{nodelayer}
		\node [style=H box] (0) at (-0.75, 0) {$H$};
		\node [style=none] (1) at (-0.75, -0.5) {};
		\node [style=none] (2) at (-0.75, 0.5) {};
	\end{pgfonlayer}
	\begin{pgfonlayer}{edgelayer}
		\draw (2.center) to (0);
		\draw (1.center) to (0);
	\end{pgfonlayer}
\end{tikzpicture}}
\endpgfgraphicnamed
 &  $\sigma$&$:$&$ 2\to 2$& %
\beginpgfgraphicnamed{diagrams//swap}
\InputIfFileExists{diagrams//swap.tikz}{}{\input{./figures/diagrams//swap.tikz}}
\endpgfgraphicnamed\\\hline
   $\mathbb I$&$:$&$1\to 1$&%
\beginpgfgraphicnamed{diagrams//Id}
\begin{tikzpicture}
	\begin{pgfonlayer}{nodelayer}
		\node [style=none] (1) at (0.5, 0.3) {};
		\node [style=none] (2) at (0.5, -0.3) {};
		\node [style=none] (3) at (0.5, -0.5) {};
		\node [style=none] (4) at (0.5, 0.5) {};
	\end{pgfonlayer}
	\begin{pgfonlayer}{edgelayer}
		\draw (1.center) to (2.center);
	\end{pgfonlayer}
\end{tikzpicture}}
\endpgfgraphicnamed & $e $&$:$&$0 \to 0$& %
\beginpgfgraphicnamed{diagrams//emptysquare}
\InputIfFileExists{diagrams//emptysquare.tikz}{}{\input{./figures/diagrams//emptysquare.tikz}}
\endpgfgraphicnamed\\\hline
   $C_a$&$:$&$ 0\to 2$& %
\beginpgfgraphicnamed{diagrams//cap}
\begin{tikzpicture}
	\begin{pgfonlayer}{nodelayer}
		\node [style=none] (0) at (0, -0) {};
		\node [style=none] (1) at (1, -0) {};
	\end{pgfonlayer}
	\begin{pgfonlayer}{edgelayer}
		\draw [bend left=90, looseness=1.50] (0.center) to (1.center);
	\end{pgfonlayer}
\end{tikzpicture}}
\endpgfgraphicnamed &$ C_u$&$:$&$ 2\to 0$&%
\beginpgfgraphicnamed{diagrams//cup}
\begin{tikzpicture}
	\begin{pgfonlayer}{nodelayer}
		\node [style=none] (0) at (0, 0.5) {};
		\node [style=none] (1) at (1, 0.5) {};
	\end{pgfonlayer}
	\begin{pgfonlayer}{edgelayer}
		\draw [bend right=90, looseness=1.50] (0.center) to (1.center);
	\end{pgfonlayer}
\end{tikzpicture}}
\endpgfgraphicnamed \\\hline
\end{tabular}
\end{center}
where $m,n\in \mathbb N$, $\alpha \in \{\frac{k\pi}{4}| k=0, 1, 2, 3, 4, 5, 6, 7 \}$,   and $e$ represents an empty diagram. 

The composition of morphisms is the same as that of the ZX-calculus for overall qubit QM. Due to the angles in the diagrams being multiples of $\frac{\pi}{4}$, we call the ZX-calculus generated by the above generators  
$\frac{\pi}{4}$-fragment ZX-calculus.

\begin{proposition}\label{rings}
The $\frac{\pi}{4}$-fragment of the ZX-calculus exactly corresponds to the matrices over the ring $\mathbb{Z}[i, \frac{1}{\sqrt{2}}]$.
\end{proposition}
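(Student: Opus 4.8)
The statement is an equality, under the standard interpretation $\denote{\cdot}$ into $\fhilb$, of two classes of matrices of dimension $2^m\times 2^n$: those arising from $\frac{\pi}{4}$-fragment diagrams of $\mathfrak{C}$, and those with every entry in $R:=\mathbb{Z}[i,\frac{1}{\sqrt{2}}]$. The plan is to prove the two inclusions separately, recording first the arithmetic fact that $R=\mathbb{Z}[\omega,\frac{1}{\sqrt{2}}]$ with $\omega=e^{i\pi/4}=\frac{1+i}{\sqrt{2}}$, so that every $r\in R$ has the form $r=\frac{1}{(\sqrt{2})^N}\sum_{k=0}^{3}a_k\omega^{k}$ with $a_k\in\mathbb{Z}$ and $N\ge 0$; note also that $\sqrt{2}\in R$ is a unit since $\sqrt{2}\cdot\frac{1}{\sqrt{2}}=1$.

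For the inclusion ``$\frac{\pi}{4}$-fragment $\subseteq$ matrices over $R$'' (soundness) I would argue by structural induction on diagrams. The base cases are the generators: the spider $R_Z^{(n,m)}$ and the phase gate $A$ have entries among $\{0,1,e^{ik\pi/4}\}$ and $e^{ik\pi/4}=\omega^{k}\in R$; the Hadamard $H$ has entries $\pm\frac{1}{\sqrt{2}}\in R$; and $\sigma$, $\mathbb{I}$, $C_a$, $C_u$ have entries in $\{0,1\}$ while $e$ is the scalar $1$. Since $R$ is a subring of $\mathbb{C}$ it is closed under $+$ and $\times$, and both diagrammatic composition and tensor product compute their matrices using only these two operations; hence if the constituent diagrams interpret to matrices over $R$, so does any diagram built from them. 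This gives one inclusion directly.

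For the reverse inclusion (universality) I would realise an arbitrary $M:2^{n}\to 2^{m}$ over $R$ explicitly. The first step is to realise every $r\in R$ as a scalar (a $0\to 0$ diagram): the factor $\frac{1}{\sqrt{2}}$ and its powers come from closing a Hadamard, e.g.\ $\langle 0|H|0\rangle$; each $\omega^{k}$ is a phase available on a spider/phase gate; nonnegative integers are obtained by iterated addition of the unit scalar, with the sign supplied by a $\pi$ phase; products are juxtapositions; and the sums defining $r$ are formed with the diagrammatic addition operation, whose soundness $\denote{D_1+D_2}=\denote{D_1}+\denote{D_2}$ and commutativity were established immediately above the proposition via the $W$-node and rules (TR10), (AD). With all ring elements available as scalars, I assemble $M=\sum_{i,j}M_{ij}\,\ketbra{i}{j}$: each matrix unit $\ketbra{i}{j}$ is a tensor of computational basis states and effects built from the spiders and $H$ (any stray $\sqrt{2}^{\pm 1}$ normalisation being corrected by a scalar diagram, using that $\sqrt{2}$ is a unit in $R$), each term is scaled by the scalar diagram for $M_{ij}$, and the finitely many terms are combined with the addition operation. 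Thus $\denote{\cdot}$ surjects from the $\frac{\pi}{4}$-fragment onto matrices over $R$.

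I expect the universality direction, and within it the well-typed use of diagrammatic addition, to be the main obstacle: the whole construction rests on being able to add diagrams of matching type so that interpretation commutes with matrix sum, which is precisely what the triangle/$W$-node machinery and the simplified (AD) rule provide, and is why the associativity and commutativity of addition are proved just before this proposition. A less hands-on alternative for this direction is to invoke the universality of the ZW-calculus for matrices over an arbitrary commutative ring \cite{amar} together with the ZX--ZW translation: since $R$ is generated by $i$ and $\frac{1}{\sqrt{2}}$, the ZW generators labelled by elements of $R$ translate into the $\frac{\pi}{4}$-fragment, yielding the same surjectivity.
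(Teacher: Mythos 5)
Your soundness direction coincides with the paper's: each generator of the $\frac{\pi}{4}$-fragment has entries in $\mathbb{Z}[i,\frac{1}{\sqrt{2}}]$, and since this is a subring of $\mathbb{C}$, composition and tensor (which only use $+$ and $\times$) keep you inside it. The problem is your primary universality argument. The ``diagrammatic addition operation'' you invoke to combine the $2^{m+n}$ scaled matrix units is not what is established immediately above the proposition: the simplified (AD) rule and the commutativity derivation in Section 2 concern only sums of \emph{complex numbers} presented in the single-wire form of a $\lambda$ box followed by a phase, i.e.\ $\lambda e^{i\gamma}=\lambda_1 e^{i\alpha}+\lambda_2 e^{i\beta}$. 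Summing diagrams of arbitrary type $n\to m$ so that $\denote{D_1+D_2}=\denote{D_1}+\denote{D_2}$ requires a genuinely different gadget: one must put the summands into ``controlled form'' and wire them through W nodes on every leg, which is essentially the construction underlying Hadzihasanovic's normal form. This paper never builds such a general sum, so your assembly $M=\sum_{i,j}M_{ij}\,\ketbra{i}{j}$ rests on an unproved step, and scalar addition alone cannot substitute for it --- matrix units are not scalars, and over a discrete ring like $\mathbb{Z}[i,\frac{1}{\sqrt{2}}]$ there is no density argument to fall back on.

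Your closing ``less hands-on alternative'' is in fact the paper's entire proof of this direction: after noting $\mathbb{Z}[i,\frac{1}{\sqrt{2}}]=\mathbb{Z}[\frac{1}{2},e^{i\frac{\pi}{4}}]$, the paper cites the ZW normal-form theorem over an arbitrary commutative ring \cite{amar} to represent any matrix over this ring by a ZW diagram with phases in the same ring, and then pushes it through the ZW-to-ZX translation. The only point needing care there is that the translation $\llbracket\cdot\rrbracket_{WX}$ of the $r$-phase generator, with $r=a_0+a_1e^{i\frac{\pi}{4}}+a_2e^{i\frac{2\pi}{4}}+a_3e^{i\frac{3\pi}{4}}$ and $a_j\in\mathbb{Z}[\frac{1}{2}]$, uses $a_j$ boxes and triangles, which land in the $\frac{\pi}{4}$-fragment by Lemma \ref{lem:lamb_tri_decomposition2}. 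Had you led with that route, your proof would match the paper's and be complete; as written, the main line of your universality argument has a concrete gap exactly where you yourself anticipated trouble, namely at the well-typed addition of arbitrary diagrams.
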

\begin{proof}
First note that $\mathbb{Z}[i, \frac{1}{\sqrt{2}}]=\mathbb{Z}[\frac{1}{2}, e^{i\frac{\pi}{4}}]$. It is also clear that each generator of the $\frac{\pi}{4}$-fragment ZX-calculus corresponds to a matrix over the ring $\mathbb{Z}[i, \frac{1}{\sqrt{2}}]$, 
thus each diagram of the $\frac{\pi}{4}$-fragment ZX-calculus must correspond to a matrix over the ring $\mathbb{Z}[i, \frac{1}{\sqrt{2}}]$. Conversely, each matrix over the ring $\mathbb{Z}[i, \frac{1}{\sqrt{2}}]$ can be represented by a normal form 
in the ZW-calculus with phases belong to the same ring $\mathbb{Z}[\frac{1}{2}, e^{i\frac{\pi}{4}}]$ \cite{amar}, hence can be represented by a diagram of the $\frac{\pi}{4}$-fragment ZX-calculus via the translation from ZW to ZX as described in  \cite{ngwang}.
\end{proof}


As was done for the universal ZX-calculus, we extend the language with two new generators-- the triangle and the $\lambda$ box, which will be shown to be representable in the $\frac{\pi}{4}$-fragment ZX-calculus (see 
lemma \ref{lem:lamb_tri_decomposition2}):

\begin{center} 
	\begin{tabular}{|r@{~}r@{~}c@{~}c|r@{~}r@{~}c@{~}c|}
		\hline
		$L$&$:$&$1\to 1$  &%
\beginpgfgraphicnamed{diagrams//lambdabox}
\begin{tikzpicture}
	\begin{pgfonlayer}{nodelayer}
		\node [style=H box] (0) at (0, 0) {$\lambda$};
		\node [style=none] (1) at (0, -0.5) {};
		\node [style=none] (2) at (0, 0.5) {};
	\end{pgfonlayer}
	\begin{pgfonlayer}{edgelayer}
		\draw (2.center) to (0);
		\draw (1.center) to (0);
	\end{pgfonlayer}
\end{tikzpicture}}
\endpgfgraphicnamed &$T$&$:$&$1\to 1$&%
\beginpgfgraphicnamed{diagrams//triangle}
\begin{tikzpicture}
	\begin{pgfonlayer}{nodelayer}
		\node [style=none] (0) at (0, 0.5) {};
		\node [style=triangle] (1) at (0, 0) {};
		\node [style=none] (2) at (0, -0.5) {};
	\end{pgfonlayer}
	\begin{pgfonlayer}{edgelayer}
		\draw (0.center) to (2.center);
	\end{pgfonlayer}
\end{tikzpicture}}
\endpgfgraphicnamed \\\hline
	\end{tabular}
\end{center}
where $0\leqslant \lambda \in \mathbb Z[\frac{1}{2}]$.

There are two kinds of rules for the morphisms of $\mathfrak{C}$:  the structure rules for $\mathfrak{C}$ as an compact closed category, as well as standard rewriting rules listed in Figure \ref{figure1t} and our extended rules listed in Figure \ref{figure2t} and Figure \ref{figure0t}.

Note that all the diagrams should be read from top to bottom.

\begin{figure}[!h]
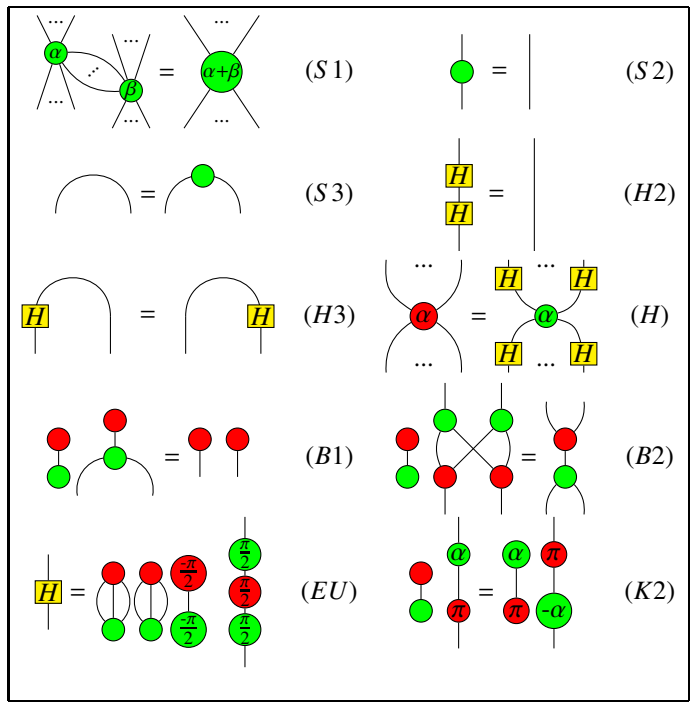

\begin{center}
\[
\quad \qquad\begin{array}{|cccc|}
\hline
\beginpgfgraphicnamed{diagrams//spiderlt}
\InputIfFileExists{diagrams//spiderlt.tikz}{}{\input{./figures/diagrams//spiderlt.tikz}}
\endpgfgraphicnamed=%
\beginpgfgraphicnamed{diagrams//spiderrt}
\InputIfFileExists{diagrams//spiderrt.tikz}{}{\input{./figures/diagrams//spiderrt.tikz}}
\endpgfgraphicnamed &(S1) &%
\beginpgfgraphicnamed{diagrams//s2new}
\InputIfFileExists{diagrams//s2new.tikz}{}{\input{./figures/diagrams//s2new.tikz}}
\endpgfgraphicnamed &(S2)\\
\beginpgfgraphicnamed{diagrams//induced_compact_structure-2wirelt}
\InputIfFileExists{diagrams//induced_compact_structure-2wirelt.tikz}{}{\input{./figures/diagrams//induced_compact_structure-2wirelt.tikz}}
\endpgfgraphicnamed=%
\beginpgfgraphicnamed{diagrams//induced_compact_structure-2wirert}
\InputIfFileExists{diagrams//induced_compact_structure-2wirert.tikz}{}{\input{./figures/diagrams//induced_compact_structure-2wirert.tikz}}
\endpgfgraphicnamed&(S3) & %
\beginpgfgraphicnamed{diagrams//hsquare}
\InputIfFileExists{diagrams//hsquare.tikz}{}{\input{./figures/diagrams//hsquare.tikz}}
\endpgfgraphicnamed &(H2)\\
\beginpgfgraphicnamed{diagrams//hslidecap}
\InputIfFileExists{diagrams//hslidecap.tikz}{}{\input{./figures/diagrams//hslidecap.tikz}}
\endpgfgraphicnamed &(H3) &%
\beginpgfgraphicnamed{diagrams//h2newlt}
\InputIfFileExists{diagrams//h2newlt.tikz}{}{\input{./figures/diagrams//h2newlt.tikz}}
\endpgfgraphicnamed=%
\beginpgfgraphicnamed{diagrams//h2newrt}
\InputIfFileExists{diagrams//h2newrt.tikz}{}{\input{./figures/diagrams//h2newrt.tikz}}
\endpgfgraphicnamed&(H)\\
\beginpgfgraphicnamed{diagrams//b1slt}
\InputIfFileExists{diagrams//b1slt.tikz}{}{\input{./figures/diagrams//b1slt.tikz}}
\endpgfgraphicnamed=%
\beginpgfgraphicnamed{diagrams//b1srt}
\InputIfFileExists{diagrams//b1srt.tikz}{}{\input{./figures/diagrams//b1srt.tikz}}
\endpgfgraphicnamed&(B1) & %
\beginpgfgraphicnamed{diagrams//b2slt}
\InputIfFileExists{diagrams//b2slt.tikz}{}{\input{./figures/diagrams//b2slt.tikz}}
\endpgfgraphicnamed=%
\beginpgfgraphicnamed{diagrams//b2srt}
\InputIfFileExists{diagrams//b2srt.tikz}{}{\input{./figures/diagrams//b2srt.tikz}}
\endpgfgraphicnamed&(B2)\\
\beginpgfgraphicnamed{diagrams//HadaDecomSingleslt}
\InputIfFileExists{diagrams//HadaDecomSingleslt.tikz}{}{\input{./figures/diagrams//HadaDecomSingleslt.tikz}}
\endpgfgraphicnamed= %
\beginpgfgraphicnamed{diagrams//HadaDecomSinglesrt}
\InputIfFileExists{diagrams//HadaDecomSinglesrt.tikz}{}{\input{./figures/diagrams//HadaDecomSinglesrt.tikz}}
\endpgfgraphicnamed&(EU)    & %
\beginpgfgraphicnamed{diagrams//k2slt}
\InputIfFileExists{diagrams//k2slt.tikz}{}{\input{./figures/diagrams//k2slt.tikz}}
\endpgfgraphicnamed=%
\beginpgfgraphicnamed{diagrams//k2srt}
\InputIfFileExists{diagrams//k2srt.tikz}{}{\input{./figures/diagrams//k2srt.tikz}}
\endpgfgraphicnamed&(K2)\\

&&&\\ 
\hline
\end{array}\]
\end{center}
  \caption{Rules for the $\frac{\pi}{4}$-fragment ZX-calculus, where $\alpha, \beta\in  \{\frac{k\pi}{4}| k=0, 1, \cdots, 7\}$.}\label{figure1t}  
  \end{figure}

 \begin{figure}[!h]
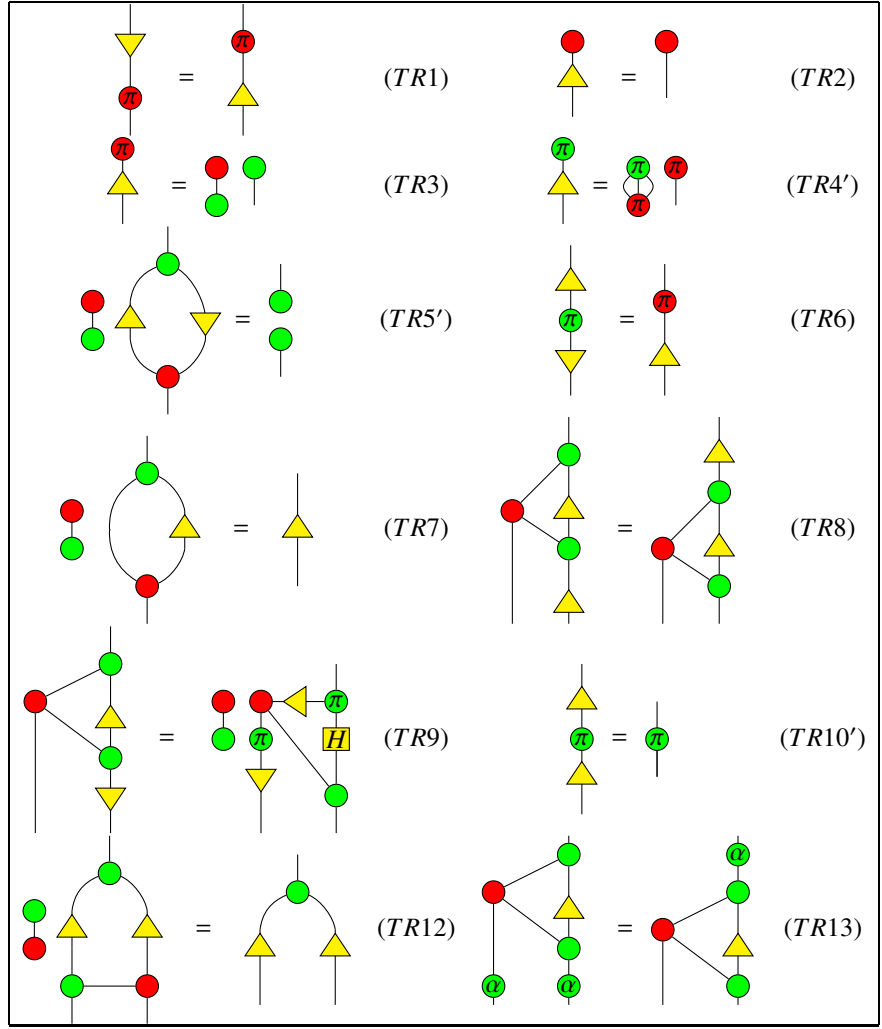

  	\begin{center}
  		\[
  		\quad \qquad\begin{array}{|cccc|}
  		\hline
  		%
\beginpgfgraphicnamed{diagrams//trianglepicommute}
\InputIfFileExists{diagrams//trianglepicommute.tikz}{}{\input{./figures/diagrams//trianglepicommute.tikz}}
\endpgfgraphicnamed &(TR1) &%
\beginpgfgraphicnamed{diagrams//triangleocopy}
\InputIfFileExists{diagrams//triangleocopy.tikz}{}{\input{./figures/diagrams//triangleocopy.tikz}}
\endpgfgraphicnamed &(TR2)\\

\beginpgfgraphicnamed{diagrams//trianglepicopy}
\InputIfFileExists{diagrams//trianglepicopy.tikz}{}{\input{./figures/diagrams//trianglepicopy.tikz}}
\endpgfgraphicnamed&(TR3) & %
\beginpgfgraphicnamed{diagrams//tr4prime}
\InputIfFileExists{diagrams//tr4prime.tikz}{}{\input{./figures/diagrams//tr4prime.tikz}}
\endpgfgraphicnamed &(TR4')\\

\beginpgfgraphicnamed{diagrams//tr5prime}
\InputIfFileExists{diagrams//tr5prime.tikz}{}{\input{./figures/diagrams//tr5prime.tikz}}
\endpgfgraphicnamed &(TR5') &%
\beginpgfgraphicnamed{diagrams//gpiintriangles}
\InputIfFileExists{diagrams//gpiintriangles.tikz}{}{\input{./figures/diagrams//gpiintriangles.tikz}}
\endpgfgraphicnamed&(TR6)\\

\beginpgfgraphicnamed{diagrams//trianglehopf}
\InputIfFileExists{diagrams//trianglehopf.tikz}{}{\input{./figures/diagrams//trianglehopf.tikz}}
\endpgfgraphicnamed&(TR7) & %
\beginpgfgraphicnamed{diagrams//2triangleup}
\InputIfFileExists{diagrams//2triangleup.tikz}{}{\input{./figures/diagrams//2triangleup.tikz}}
\endpgfgraphicnamed&(TR8)\\

\beginpgfgraphicnamed{diagrams//2triangledown}
\InputIfFileExists{diagrams//2triangledown.tikz}{}{\input{./figures/diagrams//2triangledown.tikz}}
\endpgfgraphicnamed&(TR9)    & %
\beginpgfgraphicnamed{diagrams//tr10prime}
\InputIfFileExists{diagrams//tr10prime.tikz}{}{\input{./figures/diagrams//tr10prime.tikz}}
\endpgfgraphicnamed&(TR10')\\

\beginpgfgraphicnamed{diagrams//2triangledeloopnopi}
\InputIfFileExists{diagrams//2triangledeloopnopi.tikz}{}{\input{./figures/diagrams//2triangledeloopnopi.tikz}}
\endpgfgraphicnamed &(TR12) &%
\beginpgfgraphicnamed{diagrams//alphacopyw}
\InputIfFileExists{diagrams//alphacopyw.tikz}{}{\input{./figures/diagrams//alphacopyw.tikz}}
\endpgfgraphicnamed &(TR13)\\
  		
  		
  		\hline
  		\end{array}\]
  	\end{center}
  	
  	\caption{Extended $\frac{\pi}{4}$-fragment ZX-calculus rules for triangle, where $0\leqslant \lambda \in \mathbb Z[\frac{1}{2}], \alpha \in \{\frac{k\pi}{4}| k=0, 1, \cdots, 7\}.$}\label{figure2t}
  \end{figure}

  
  \begin{figure}[!h]
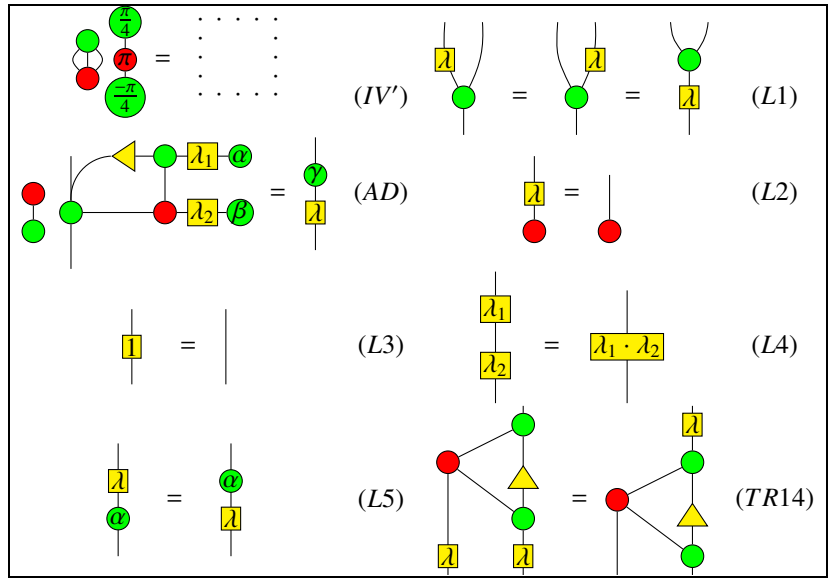

\begin{center}
\[
\quad \qquad\begin{array}{|cccc|}
\hline
%
\beginpgfgraphicnamed{diagrams//newemptyrl}
\InputIfFileExists{diagrams//newemptyrl.tikz}{}{\input{./figures/diagrams//newemptyrl.tikz}}
\endpgfgraphicnamed &(IV') &%
\beginpgfgraphicnamed{diagrams//lambbranch}
\InputIfFileExists{diagrams//lambbranch.tikz}{}{\input{./figures/diagrams//lambbranch.tikz}}
\endpgfgraphicnamed &(L1)\\

%
\beginpgfgraphicnamed{diagrams//plusnew}
\InputIfFileExists{diagrams//plusnew.tikz}{}{\input{./figures/diagrams//plusnew.tikz}}
\endpgfgraphicnamed&(AD) &%
\beginpgfgraphicnamed{diagrams//lambdadelete}
\InputIfFileExists{diagrams//lambdadelete.tikz}{}{\input{./figures/diagrams//lambdadelete.tikz}}
\endpgfgraphicnamed &(L2)\\

\beginpgfgraphicnamed{diagrams//sqr1is1}
\InputIfFileExists{diagrams//sqr1is1.tikz}{}{\input{./figures/diagrams//sqr1is1.tikz}}
\endpgfgraphicnamed&(L3) &%
\beginpgfgraphicnamed{diagrams//lambdatimes}
\InputIfFileExists{diagrams//lambdatimes.tikz}{}{\input{./figures/diagrams//lambdatimes.tikz}}
\endpgfgraphicnamed&(L4)\\

\beginpgfgraphicnamed{diagrams//lambdaalpha}
\InputIfFileExists{diagrams//lambdaalpha.tikz}{}{\input{./figures/diagrams//lambdaalpha.tikz}}
\endpgfgraphicnamed&(L5) &%
\beginpgfgraphicnamed{diagrams//lambdacopyw}
\InputIfFileExists{diagrams//lambdacopyw.tikz}{}{\input{./figures/diagrams//lambdacopyw.tikz}}
\endpgfgraphicnamed &(TR14)\\
\hline
\end{array}\]
\end{center}
  \caption{Extended $\frac{\pi}{4}$-fragment ZX-calculus rules for $\lambda$ and addition, where $0\leqslant \lambda, \lambda_1,  \lambda_2 \in \mathbb Z[\frac{1}{2}], \alpha, \beta, \gamma \in \{\frac{k\pi}{4}| k=0, 1, \cdots, 7\};$ in (AD), $\lambda e^{i\gamma} 
  =\lambda_1 e^{i\alpha}+ \lambda_2 e^{i\beta}$.}\label{figure0t}  
  \end{figure}


\FloatBarrier

Note that the upside-down versions of all the above listed rules still hold.

Since now we focus on the $\frac{\pi}{4}$-fragment ZX-calculus, the empty rule (IV) in Figure \ref{figure0} is changed to the form of rule (IV$^{\prime}$) in Figure \ref{figure0t}. However, we still have the following useful property. 


\begin{lemma}
The frequently used empty rule can be derived from the $\frac{\pi}{4}$-fragment ZX-calculus:
\begin{equation}
\beginpgfgraphicnamed{diagrams//emptyoften}
\InputIfFileExists{diagrams//emptyoften.tikz}{}{\input{./figures/diagrams//emptyoften.tikz}}
\endpgfgraphicnamed
\end{equation}
\end{lemma}

\begin{proof}
\begin{equation*}
\beginpgfgraphicnamed{diagrams//newemptytoold}
\InputIfFileExists{diagrams//newemptytoold.tikz}{}{\input{./figures/diagrams//newemptytoold.tikz}}
\endpgfgraphicnamed
\end{equation*}
\end{proof}

\begin{lemma}\label{lem:lamb_tri_decomposition2}
The triangle %
\beginpgfgraphicnamed{diagrams//triangle}
}
\endpgfgraphicnamed and the lambda box %
\beginpgfgraphicnamed{diagrams//lambdabox}
}
\endpgfgraphicnamed  are expressible in the $\frac{\pi}{4}$-fragment ZX-calculus.
\end{lemma}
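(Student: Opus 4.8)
The plan is to reduce the claim to the semantic characterisation already established in Proposition~\ref{rings} and, for later use in rewriting, to upgrade the resulting existence statement to explicit diagrams. First I would record the linear maps denoted by the two generators: the triangle denotes $\begin{pmatrix}1&1\\0&1\end{pmatrix}$ and the $\lambda$ box denotes the diagonal map $\begin{pmatrix}1&0\\0&\lambda\end{pmatrix}$ with $0\leqslant\lambda\in\mathbb{Z}[\tfrac12]$. The exact normalisation is irrelevant here, since all that matters is the ring in which the entries live. Because $\mathbb{Z}\subseteq\mathbb{Z}[\tfrac12]\subseteq\mathbb{Z}[i,\tfrac{1}{\sqrt2}]$, both matrices have all entries in $\mathbb{Z}[i,\tfrac{1}{\sqrt2}]$. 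By the converse direction of Proposition~\ref{rings} (every matrix over $\mathbb{Z}[i,\tfrac{1}{\sqrt2}]$ is the interpretation of some $\frac\pi4$-fragment diagram, obtained through the ZW normal form of \cite{amar} and the translation of \cite{ngwang}), each of these two maps is realised by a diagram built solely from the generators of the $\frac\pi4$-fragment. This already proves the expressibility asserted in the lemma.

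To obtain usable decompositions rather than mere existence, I would then construct the diagrams explicitly. For the triangle I would take the decomposition of $\begin{pmatrix}1&1\\0&1\end{pmatrix}$ into green/red spiders and Hadamards used in \cite{ngwang} and simply verify that every phase occurring in it is a multiple of $\tfrac\pi4$ and every scalar lies in $\mathbb{Z}[i,\tfrac{1}{\sqrt2}]$; since the triangle is a single fixed matrix, this is a finite check carried out by directly computing the denotation. The $\lambda$ box is the genuinely parametric case, which I would build inductively in $\lambda$: the value $\lambda=1$ gives the identity; $\lambda=2$, i.e. $\mathrm{diag}(1,2)=\id{}+\ket1\!\bra1$, is obtained from the addition operation realised via the $W$/black-node structure used for (AD); and $\mathrm{diag}(1,\tfrac12)$ is available because the scalar $\tfrac12$ already lives in the fragment. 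An arbitrary dyadic $\lambda=m/2^{k}$ is then assembled by composing such pieces, using that diagonal maps multiply under composition, $\mathrm{diag}(1,\lambda_1)\,\mathrm{diag}(1,\lambda_2)=\mathrm{diag}(1,\lambda_1\lambda_2)$, so that the denotation of the composite is computed purely at the matrix level; in particular no rule about the $\lambda$ box itself is invoked, which avoids any circularity.

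The \emph{main obstacle} is precisely this parametric construction of the $\lambda$ box. The triangle requires only one matrix identity, whereas $L_\lambda$ ranges over infinitely many values, so I must give a uniform recipe covering every non-negative dyadic rational and verify it denotes $\begin{pmatrix}1&0\\0&\lambda\end{pmatrix}$ for all of them simultaneously. The delicate points will be controlling the scalar factors, so that repeated halving and doubling produce exactly $\tfrac12$ and $2$ on the $\ket1$ component without spurious global constants, and ensuring that each intermediate diagram stays inside the $\frac\pi4$-fragment. If one is content with expressibility as stated, none of this is needed and the short argument via Proposition~\ref{rings} already suffices; the explicit route is worthwhile only because the subsequent proofs manipulate these decompositions directly.
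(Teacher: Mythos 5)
Your primary route is circular within this paper's logical structure. The converse direction of Proposition~\ref{rings} is proved by taking the ZW normal form of a matrix over $\mathbb{Z}[i,\tfrac{1}{\sqrt 2}]$ and translating it into ZX; but the translation $\llbracket\cdot\rrbracket_{WX}$ sends the ZW $r$-phase generator to a diagram built from $a_j$ boxes (i.e.\ $\lambda$ boxes with dyadic parameters) and sends the $w$ node to a diagram containing the triangle, and the paper states explicitly, right after defining $\llbracket\cdot\rrbracket_{WX}$, that ``the representation of $a_j$ box is described in Lemma~\ref{lem:lamb_tri_decomposition2}.'' So the claim that the translated diagram actually lies in the $\frac{\pi}{4}$-fragment is precisely the content of the lemma you are trying to prove; invoking Proposition~\ref{rings} here inverts the dependency and begs the question. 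The paper itself signals this order when it introduces the two generators ``which will be shown to be representable in the $\frac{\pi}{4}$-fragment ZX-calculus (see lemma~\ref{lem:lamb_tri_decomposition2})'': the lemma is the load-bearing explicit construction, not a consequence of the proposition.

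Your fallback explicit construction is the right idea and close in spirit to the paper's actual proof, but it stops exactly at the decisive step. The paper handles the triangle by a fixed $\frac{\pi}{4}$-phase decomposition taken from \cite{Coeckebk,Emmanuel} (not from \cite{ngwang}, where the triangle is a generator, so there is no decomposition there to borrow --- a minor misattribution on your part), and handles the $\lambda$ box additively: write $\lambda=[\lambda]+\{\lambda\}$, use the already-known representation of integer $\lambda$ from the universal completion, expand $\{\lambda\}=a_1\tfrac12+\cdots+a_s\tfrac{1}{2^s}$, give \emph{explicit} diagrams for the boxes $\lambda=\tfrac12$ and $\lambda=\tfrac{1}{2^k}$ in terms of the triangle and $Z,X$ phases, and glue the summands with the addition rule (AD). Your multiplicative scheme $\mathrm{diag}(1,\lambda_1)\,\mathrm{diag}(1,\lambda_2)=\mathrm{diag}(1,\lambda_1\lambda_2)$ would also work in principle, but it hinges on producing a concrete fragment diagram denoting $\mathrm{diag}(1,\tfrac12)$, and your justification --- ``the scalar $\tfrac12$ already lives in the fragment'' --- does not supply one: a scalar multiplies the whole matrix, giving $\mathrm{diag}(\tfrac12,\tfrac12)$ rather than $\mathrm{diag}(1,\tfrac12)$. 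One still needs a genuine gadget, e.g.\ realising $\tfrac12\bigl(\mathrm{id}+\ketbra{0}{0}\bigr)$ via the (AD) addition structure together with a scalar, or the paper's triangle-plus-phases diagram. Since you yourself flag this as the main obstacle and leave it unresolved, the proposal as written establishes the lemma neither by the semantic shortcut (circular) nor by the explicit route (incomplete at its key point).
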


\begin{proof}
The triangle %
\beginpgfgraphicnamed{diagrams//triangle}
}
\endpgfgraphicnamed has been represented in the $\frac{\pi}{4}$-fragment ZX-calculus in \cite{Coeckebk,Emmanuel}, we give the decomposition form according to \cite{Coeckebk} as follows:
\begin{equation}\label{triangleslash}
\beginpgfgraphicnamed{diagrams//triangledecompose}
\InputIfFileExists{diagrams//triangledecompose.tikz}{}{\input{./figures/diagrams//triangledecompose.tikz}}
\endpgfgraphicnamed 
\end{equation}

 Now we deal with the lambda box. 
First we can write $\lambda$ as a sum of its integer part and remainder part: $\lambda= [\lambda] +\{\lambda\}$, where $ [\lambda]$ is a non-negative integer and $0\leq\{\lambda\}<1$. Since  $\lambda \in \mathbb Z[\frac{1}{2}]$,
$\{\lambda\}$ can be uniquely written as a binary expansion of the form $a_1\frac{1}{2}+\cdots+a_s\frac{1}{2^s}$, where $a_i\in \{ 0, 1\} , i=1, \cdots, s.$  For the integer part $[\lambda]$, the corresponding $\lambda$ box 
has been represented in the $\frac{\pi}{4}$-fragment ZX-calculus during the universal completion of the ZX-calculus. For the remainder part $\{\lambda\}$,   it is sufficient to express the $\lambda$ box for
$\lambda=\frac{1}{2}$ in terms of triangle and $Z, X$ phases, since one can apply the addition rule (AD) by recursion.  Actually, we have

\begin{equation}
\beginpgfgraphicnamed{diagrams//lambda1by2}
\InputIfFileExists{diagrams//lambda1by2.tikz}{}{\input{./figures/diagrams//lambda1by2.tikz}}
\endpgfgraphicnamed, \hspace{0.5cm}
\beginpgfgraphicnamed{diagrams//lambda1by2k}
\InputIfFileExists{diagrams//lambda1by2k.tikz}{}{\input{./figures/diagrams//lambda1by2k.tikz}}
\endpgfgraphicnamed 
\end{equation}

  Therefore, 
    $$%
\beginpgfgraphicnamed{diagrams//lexpress4new}
\InputIfFileExists{diagrams//lexpress4new.tikz}{}{\input{./figures/diagrams//lexpress4new.tikz}}
\endpgfgraphicnamed.$$

\end{proof}

The diagrams in the ZX-calculus for Clifford+T QM have the same standard interpretation $\llbracket \cdot \rrbracket$ as that for the whole qubit QM. 

The ZW-calculus for Clifford+T QM almost remain the same as for the whole qubit QM, except that now $r$ in the generator %
\beginpgfgraphicnamed{diagrams//rgatewhite}
\begin{tikzpicture}
	\begin{pgfonlayer}{nodelayer}
		\node [style=wn] (0) at (0, 0) {};
		\node [style=none] (1) at (0, -0.5) {};
		\node [style=none] (2) at (0.25, 0) {$r$};
		\node [style=none] (3) at (0, 0.5) {};
	\end{pgfonlayer}
	\begin{pgfonlayer}{edgelayer}
		\draw (3.center) to (1.center);
	\end{pgfonlayer}
\end{tikzpicture}}
\endpgfgraphicnamed  lies in  the ring $\mathbb{Z}[i, \frac{1}{\sqrt{2}}]$.

\section{Features of the new generators }\label{zxforct}
In this section, we show the features of the two generators--the triangle and the  $\lambda$ box. 

 The triangle notation was first introduced in \cite{Emmanuel} as a shortcut for the proof of completeness of the ZX-calculus for Clifford+T QM.  Afterwards, it is employed as a a generator for a complete axiomatisation of the ZX-calculus for the whole pure qubit QM in \cite{ngwang} and for the  Clifford+T fragment in this paper. The purpose to use it as a generator is to make 
 the rewriting rules simple and the translation between the ZX-calculus and the ZW-calculus direct.
 
 Moreover, very recently we find that the triangle can be an essential component for the construction of a Toffoli gate as shown in the following form:
 \begin{equation}\label{toffoligate}
\beginpgfgraphicnamed{diagrams//toffoligtscalar}
\InputIfFileExists{diagrams//toffoligtscalar.tikz}{}{\input{./figures/diagrams//toffoligtscalar.tikz}}
\endpgfgraphicnamed
\end{equation}
where  the triangle with $-1$ on the top-left corner is the inverse of the normal triangle.

In contrast to the standard circuit form which realises the Toffoli gate in elementary gates \cite{Nielsen},  the form of (\ref{toffoligate}) is much more simpler, thus promising for simplifying Clifford + T quantum circuits with the aid of a bunch of  ZX-calculus rules involving triangles.
 
 Unexpectedly, we also realise that the denotation of a slash box used in \cite{Coeckebk} to construct a Toffoli gate is just a triangle (up to a scalar) as shown in  (\ref{triangleslash}).

Next we illustrate the feature of the  $\lambda$ box. In  \cite{ngwang}  and the previous parts of this paper, the  $\lambda$ box is restricted to be parameterised by a non-negative real number. While in  \cite{coeckewang}, it has been generalised to a general green phase of form %
\beginpgfgraphicnamed{diagrams//greenbxa}
\begin{tikzpicture}
	\begin{pgfonlayer}{nodelayer}
		\node [style=none] (0) at (0, 0.25) {};
		\node [style=gbox] (1) at (0, 0) {$a$};
		\node [style=none] (2) at (0, -0.25) {};
	\end{pgfonlayer}
	\begin{pgfonlayer}{edgelayer}
		\draw (0.center) to (2.center);
	\end{pgfonlayer}
\end{tikzpicture}}
\endpgfgraphicnamed with arbitrary complex number as a parameter. Similarly, we have the general red phase %
\beginpgfgraphicnamed{diagrams//redbxb}
\InputIfFileExists{diagrams//redbxb.tikz}{}{\input{./figures/diagrams//redbxb.tikz}}
\endpgfgraphicnamed  \cite{coeckewang}. Below we give the spider form of general phase which are interpreted in Hilbert spaces:
 \begin{equation}\label{gphaseinter}
 \begin{array}{c}
\left\llbracket %
\beginpgfgraphicnamed{diagrams//generalgreenspider}
\InputIfFileExists{diagrams//generalgreenspider.tikz}{}{\input{./figures/diagrams//generalgreenspider.tikz}}
\endpgfgraphicnamed \right\rrbracket=\ket{0}^{\otimes m}\bra{0}^{\otimes n}+a\ket{1}^{\otimes m}\bra{1}^{\otimes n}, \\
\left\llbracket %
\beginpgfgraphicnamed{diagrams//generalredspider}
\InputIfFileExists{diagrams//generalredspider.tikz}{}{\input{./figures/diagrams//generalredspider.tikz}}
\endpgfgraphicnamed \right\rrbracket=\ket{+}^{\otimes m}\bra{+}^{\otimes n}+a\ket{-}^{\otimes m}\bra{-}^{\otimes n},
\end{array}
\end{equation}
where $a$ is an arbitrary complex number. The generalised spider rules and colour change rule are depicted in the following:
 \begin{equation}\label{generalspider}
 \begin{array}{c}
\beginpgfgraphicnamed{diagrams//generalgreenspiderfuse}
\InputIfFileExists{diagrams//generalgreenspiderfuse.tikz}{}{\input{./figures/diagrams//generalgreenspiderfuse.tikz}}
\endpgfgraphicnamed, \quad   %
\beginpgfgraphicnamed{diagrams//generalredspiderfuse}
\InputIfFileExists{diagrams//generalredspiderfuse.tikz}{}{\input{./figures/diagrams//generalredspiderfuse.tikz}}
\endpgfgraphicnamed,\\
\beginpgfgraphicnamed{diagrams//generalcolorchange}
\InputIfFileExists{diagrams//generalcolorchange.tikz}{}{\input{./figures/diagrams//generalcolorchange.tikz}}
\endpgfgraphicnamed.
\end{array}
\end{equation}
where $a, b$ are arbitrary complex numbers.

Now we consider the generalised supplementarity-- also called cyclotomic supplementarity, with supplementarity as a special case--which is interpreted as merging $n$ subdiagrams if the $n$ phase angles divide the circle uniformly \cite{jpvw}.  We give the diagrammatic form of the generalised supplementarity as follows:
\begin{equation}\label{generalpsgt0}
\beginpgfgraphicnamed{diagrams//generalsupp}
\InputIfFileExists{diagrams//generalsupp.tikz}{}{\input{./figures/diagrams//generalsupp.tikz}}
\endpgfgraphicnamed
\end{equation}
where there are $n$ parallel wires in the diagram at the right-hand side.

Next we show that the generalised supplementarity can be seen as a special form of the generalised spider rule as shown in (\ref{generalspider}). For simplicity, we ignore scalars in the rest of this section.

First note that by comparing the normal form translated from the ZW-calculus \cite{amar}, we  have 
\begin{equation}\label{generalpsgt}
\beginpgfgraphicnamed{diagrams//generalpsgte}
\InputIfFileExists{diagrams//generalpsgte.tikz}{}{\input{./figures/diagrams//generalpsgte.tikz}}
\endpgfgraphicnamed
\end{equation}
where $a\in \mathbb{C}, a\neq 1$. 

Especially,
\begin{equation}\label{generalpsgt2}
\beginpgfgraphicnamed{diagrams//generalpsgte2}
\InputIfFileExists{diagrams//generalpsgte2.tikz}{}{\input{./figures/diagrams//generalpsgte2.tikz}}
\endpgfgraphicnamed
\end{equation}
 where $\alpha \in [0, 2\pi), \alpha\neq \pi$. For $\alpha= \pi$, we can use the $\pi$ copy rule directly. 
  
  Then
  \begin{equation}\label{generalpsgt3}
\beginpgfgraphicnamed{diagrams//generalpsgte3}
\InputIfFileExists{diagrams//generalpsgte3.tikz}{}{\input{./figures/diagrams//generalpsgte3.tikz}}
\endpgfgraphicnamed
\end{equation}
  Note that if $n$ is odd, then
  \begin{equation}\label{generalpsgt4}
\beginpgfgraphicnamed{diagrams//generalpsgte4}
\InputIfFileExists{diagrams//generalpsgte4.tikz}{}{\input{./figures/diagrams//generalpsgte4.tikz}}
\endpgfgraphicnamed
\end{equation}

If $n$ is even, then
 \begin{equation}\label{generalpsgt5}
\beginpgfgraphicnamed{diagrams//generalpsgte5}
\InputIfFileExists{diagrams//generalpsgte5.tikz}{}{\input{./figures/diagrams//generalpsgte5.tikz}}
\endpgfgraphicnamed
\end{equation}

It is not hard to see that if we consider the parity of $n$ in the right diagram of (\ref{generalpsgt0}) with no consideration of scalars, then by Hopf law we get the same result as shown in (\ref{generalpsgt4}) and (\ref{generalpsgt5}).

\section{Interpretations from ZX-calculus to  ZW-calculus and back forth}\label{zwtozx2}
As for the Clifford+T QM,  the interpretation $\llbracket \cdot \rrbracket_{XW}$   from ZX-calculus to  ZW-calculus remains the same:
\[
 \left\llbracket%
\beginpgfgraphicnamed{diagrams//emptysquare}
\InputIfFileExists{diagrams//emptysquare.tikz}{}{\input{./figures/diagrams//emptysquare.tikz}}
\endpgfgraphicnamed\right\rrbracket_{XW}=  %
\beginpgfgraphicnamed{diagrams//emptysquare}
\InputIfFileExists{diagrams//emptysquare.tikz}{}{\input{./figures/diagrams//emptysquare.tikz}}
\endpgfgraphicnamed,  \quad
  \left\llbracket%
\beginpgfgraphicnamed{diagrams//Id}
}
\endpgfgraphicnamed\right\rrbracket_{XW}=  %
\beginpgfgraphicnamed{diagrams//Id}
}
\endpgfgraphicnamed,   \quad
 \left\llbracket%
\beginpgfgraphicnamed{diagrams//cap}
}
\endpgfgraphicnamed\right\rrbracket_{XW}=  %
\beginpgfgraphicnamed{diagrams//cap}
}
\endpgfgraphicnamed,   \quad
  \left\llbracket%
\beginpgfgraphicnamed{diagrams//cup}
}
\endpgfgraphicnamed\right\rrbracket_{XW}=  %
\beginpgfgraphicnamed{diagrams//cup}
}
\endpgfgraphicnamed,  
  \]
  
  \[
   \left\llbracket%
\beginpgfgraphicnamed{diagrams//swap}
\InputIfFileExists{diagrams//swap.tikz}{}{\input{./figures/diagrams//swap.tikz}}
\endpgfgraphicnamed\right\rrbracket_{XW}=  %
\beginpgfgraphicnamed{diagrams//swap}
\InputIfFileExists{diagrams//swap.tikz}{}{\input{./figures/diagrams//swap.tikz}}
\endpgfgraphicnamed,   \quad
   \left\llbracket%
\beginpgfgraphicnamed{diagrams//generator_spider-nonum}
\InputIfFileExists{diagrams//generator_spider-nonum.tikz}{}{\input{./figures/diagrams//generator_spider-nonum.tikz}}
\endpgfgraphicnamed\right\rrbracket_{XW}=  %
\beginpgfgraphicnamed{diagrams//spiderwhite}
\InputIfFileExists{diagrams//spiderwhite.tikz}{}{\input{./figures/diagrams//spiderwhite.tikz}}
\endpgfgraphicnamed,   \quad 
     \left\llbracket%
\beginpgfgraphicnamed{diagrams//alphagate}
}
\endpgfgraphicnamed\right\rrbracket_{XW}=  %
\beginpgfgraphicnamed{diagrams//alphagatewhite}
\begin{tikzpicture}
	\begin{pgfonlayer}{nodelayer}
		\node [style=wn] (0) at (0.25, 0) {};
		\node [style=none] (1) at (0.25, -0.5) {};
		\node [style=none] (2) at (0.25, 0.5) {};
		\node [style=none] (3) at (0.5, 0) {$e^{i\alpha}$};
	\end{pgfonlayer}
	\begin{pgfonlayer}{edgelayer}
		\draw (2.center) to (1.center);
	\end{pgfonlayer}
\end{tikzpicture}}
\endpgfgraphicnamed,   \quad 
       \left\llbracket%
\beginpgfgraphicnamed{diagrams//lambdabox}
}
\endpgfgraphicnamed\right\rrbracket_{XW}=  %
\beginpgfgraphicnamed{diagrams//lambdagatewhiteld}
\begin{tikzpicture}
	\begin{pgfonlayer}{nodelayer}
		\node [style=none] (0) at (0, -0.5) {};
		\node [style=none] (1) at (0.25, 0) {$\lambda$};
		\node [style=wn] (2) at (0, 0) {};
		\node [style=none] (3) at (0, 0.5) {};
	\end{pgfonlayer}
	\begin{pgfonlayer}{edgelayer}
		\draw (3.center) to (0.center);
	\end{pgfonlayer}
\end{tikzpicture}}
\endpgfgraphicnamed,   
         \]

 \[
  \left\llbracket%
\beginpgfgraphicnamed{diagrams//HadaDecomSingleslt}
}
\endpgfgraphicnamed\right\rrbracket_{XW}=  %
\beginpgfgraphicnamed{diagrams//Hadamardwhite}
\InputIfFileExists{diagrams//Hadamardwhite.tikz}{}{\input{./figures/diagrams//Hadamardwhite.tikz}}
\endpgfgraphicnamed,   \quad
    \left\llbracket%
\beginpgfgraphicnamed{diagrams//triangle}
}
\endpgfgraphicnamed\right\rrbracket_{XW}=  %
\beginpgfgraphicnamed{diagrams//trianglewhite}
\InputIfFileExists{diagrams//trianglewhite.tikz}{}{\input{./figures/diagrams//trianglewhite.tikz}}
\endpgfgraphicnamed, \]
    
    \[ \llbracket D_1\otimes D_2  \rrbracket_{XW} =  \llbracket D_1  \rrbracket_{XW} \otimes  \llbracket  D_2  \rrbracket_{XW}, \quad 
 \llbracket D_1\circ D_2  \rrbracket_{XW} =  \llbracket D_1  \rrbracket_{XW} \circ  \llbracket  D_2  \rrbracket_{XW},
 \]
where $0\leqslant \lambda \in \mathbb Z[\frac{1}{2}], \alpha \in \{\frac{k\pi}{4}| k=0, 1, \cdots, 7\}$.

\begin{lemma}\label{xtowpreservesemantics2}
Suppose $D$ is an arbitrary diagram in ZX-calculus. Then  $\llbracket \llbracket D \rrbracket_{XW}\rrbracket = \llbracket D \rrbracket$.
\end{lemma}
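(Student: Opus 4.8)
The plan is to exploit the fact that both sides of the claimed equation are defined compositionally, so that the statement reduces to a finite check on generators. By definition, the translation $\llbracket \cdot \rrbracket_{XW}$ sends a tensor product to the tensor of the translations and a composite to the composite of the translations, and it also respects the compact closed structure (caps, cups, swaps, identities). The standard interpretation $\llbracket \cdot \rrbracket$ into $\fhilb$ is likewise a compact closed monoidal functor, preserving $\otimes$, $\circ$, and the compact structure. Hence both $\llbracket \llbracket \cdot \rrbracket_{XW} \rrbracket$ and $\llbracket \cdot \rrbracket$ are monoidal functors from the ZX-calculus into $\fhilb$. Since every ZX diagram is generated by the listed generators (the spiders $R_Z^{(n,m)}$, the phase gate $A$, the Hadamard $H$, the swap $\sigma$, the identity $\mathbb I$, the empty diagram $e$, the cap $C_a$, the cup $C_u$, together with the triangle $T$ and the $\lambda$ box $L$) under $\otimes$, $\circ$, and the compact closed structure, it suffices to verify the equation on each of these generators; functoriality then propagates the equality to every diagram $D$.

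First I would dispatch the structural generators. For the identity, swap, cap, cup, and empty diagram, the translation $\llbracket \cdot \rrbracket_{XW}$ is literally the identity on the picture, since these generators are sent to the syntactically identical ZW generators; and the standard interpretation of those ZW structural generators coincides with the compact closed structure interpreting the corresponding ZX generators in $\fhilb$. Thus both sides agree immediately. This leaves the genuinely calculus-specific generators: the green spider, the phase gate $A$, the Hadamard, the triangle, and the $\lambda$ box.

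For each remaining generator $g$ I would compute the two matrices explicitly. On one side, $\llbracket g \rrbracket$ is read off directly from the standard ZX interpretation. On the other, I would first apply the translation supplied in the definition of $\llbracket \cdot \rrbracket_{XW}$ (the green spider to the white ZW spider, the phase gate to the white phase, the Hadamard to the ZW Hadamard, the triangle to the ZW triangle, the $\lambda$ box to the white ZW $\lambda$ gate) and then evaluate the resulting ZW diagram in $\fhilb$. Matching the two reduces, in each case, to a short linear-algebraic check using the known Hilbert-space formulas for the ZW generators.

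The main obstacle I anticipate is the triangle and the $\lambda$ box. Their translations are not single ZW generators but small composite ZW diagrams, so their Hilbert-space images must be obtained by composing the interpretations of several ZW boxes, and one must confirm that the resulting $2\times 2$ matrices coincide with the $2\times 2$ matrices read off for the triangle and $\lambda$ box in the ZX-calculus, for every admissible $0 \leqslant \lambda \in \mathbb Z[\tfrac12]$. Because the interpretation here has been \emph{slightly modified} relative to \cite{ngwang} in order to accommodate the Clifford+T restriction, the essential point is to check that this modification still produces exactly these matrices; once all generator computations agree, the compositional argument above finishes the proof.
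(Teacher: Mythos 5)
Your proposal is correct and is exactly the routine compositional argument that the paper compresses into ``The proof is easy'': since $\llbracket \cdot \rrbracket_{XW}$ is defined to preserve $\otimes$ and $\circ$ and the standard interpretation $\llbracket \cdot \rrbracket$ is likewise compositional, the identity reduces to a finite matrix check on the generators, including the structural ones sent to syntactically identical ZW generators. One small correction to your closing paragraph: the modification for the Clifford+T fragment affects only the \emph{reverse} translation $\llbracket \cdot \rrbracket_{WX}$ (the $r$-phase part), while the paper states that $\llbracket \cdot \rrbracket_{XW}$ remains the same as in \cite{ngwang}, so your anticipated obstacle is moot --- the generator verifications for the triangle and $\lambda$ box are literally those of \cite{ngwang}, merely restricted to $\alpha \in \{\frac{k\pi}{4}\}$ and $0 \leqslant \lambda \in \mathbb{Z}[\frac{1}{2}]$.
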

The proof is easy.

On the other hand, the interpretation $\llbracket \cdot \rrbracket_{WX}$  from ZW-calculus to  ZX-calculus for Clifford+T QM  is almost the same as the case of the overall qubit QM except for the $r$-phase part:

\[
 \left\llbracket%
\beginpgfgraphicnamed{diagrams//emptysquare}
\InputIfFileExists{diagrams//emptysquare.tikz}{}{\input{./figures/diagrams//emptysquare.tikz}}
\endpgfgraphicnamed\right\rrbracket_{WX}=  %
\beginpgfgraphicnamed{diagrams//emptysquare}
\InputIfFileExists{diagrams//emptysquare.tikz}{}{\input{./figures/diagrams//emptysquare.tikz}}
\endpgfgraphicnamed,  \quad
  \left\llbracket%
\beginpgfgraphicnamed{diagrams//Id}
}
\endpgfgraphicnamed\right\rrbracket_{WX}=  %
\beginpgfgraphicnamed{diagrams//Id}
}
\endpgfgraphicnamed,   \quad
 \left\llbracket%
\beginpgfgraphicnamed{diagrams//cap}
}
\endpgfgraphicnamed\right\rrbracket_{WX}=  %
\beginpgfgraphicnamed{diagrams//cap}
}
\endpgfgraphicnamed,   \quad
  \left\llbracket%
\beginpgfgraphicnamed{diagrams//cup}
}
\endpgfgraphicnamed\right\rrbracket_{WX}=  %
\beginpgfgraphicnamed{diagrams//cup}
}
\endpgfgraphicnamed,  
  \]
  
  \[
   \left\llbracket%
\beginpgfgraphicnamed{diagrams//swap}
\InputIfFileExists{diagrams//swap.tikz}{}{\input{./figures/diagrams//swap.tikz}}
\endpgfgraphicnamed\right\rrbracket_{WX}=  %
\beginpgfgraphicnamed{diagrams//swap}
\InputIfFileExists{diagrams//swap.tikz}{}{\input{./figures/diagrams//swap.tikz}}
\endpgfgraphicnamed,   \quad
   \left\llbracket%
\beginpgfgraphicnamed{diagrams//spiderwhite}
\InputIfFileExists{diagrams//spiderwhite.tikz}{}{\input{./figures/diagrams//spiderwhite.tikz}}
\endpgfgraphicnamed\right\rrbracket_{WX}=  %
\beginpgfgraphicnamed{diagrams//generator_spider-nonum}
\InputIfFileExists{diagrams//generator_spider-nonum.tikz}{}{\input{./figures/diagrams//generator_spider-nonum.tikz}}
\endpgfgraphicnamed,   \quad 
       \left\llbracket%
\beginpgfgraphicnamed{diagrams//piblack}
\begin{tikzpicture}
	\begin{pgfonlayer}{nodelayer}
		\node [style=bn] (0) at (0, 0) {};
		\node [style=none] (1) at (0, -0.5) {};
		\node [style=none] (2) at (0, 0.5) {};
	\end{pgfonlayer}
	\begin{pgfonlayer}{edgelayer}
		\draw (2.center) to (1.center);
	\end{pgfonlayer}
\end{tikzpicture}}
\endpgfgraphicnamed\right\rrbracket_{WX}=  %
\beginpgfgraphicnamed{diagrams//pired}
\begin{tikzpicture}
	\begin{pgfonlayer}{nodelayer}
		\node [style=none] (0) at (0, -0.5) {};
		\node [style=none] (1) at (0, 0.5) {};
		\node [style=rn] (2) at (0, 0) {$\pi$};
	\end{pgfonlayer}
	\begin{pgfonlayer}{edgelayer}
		\draw (1.center) to (0.center);
	\end{pgfonlayer}
\end{tikzpicture}}
\endpgfgraphicnamed,   
         \]

 \[
  \left\llbracket%
\beginpgfgraphicnamed{diagrams//corsszw}
\InputIfFileExists{diagrams//corsszw.tikz}{}{\input{./figures/diagrams//corsszw.tikz}}
\endpgfgraphicnamed\right\rrbracket_{WX}=  %
\beginpgfgraphicnamed{diagrams//crossxz}
\InputIfFileExists{diagrams//crossxz.tikz}{}{\input{./figures/diagrams//crossxz.tikz}}
\endpgfgraphicnamed,   \quad \quad
    \left\llbracket%
\beginpgfgraphicnamed{diagrams//wblack}
\begin{tikzpicture}
	\begin{pgfonlayer}{nodelayer}
		\node [style=bn] (0) at (0, 0) {};
		\node [style=none] (1) at (-0.25, -0.5) {};
		\node [style=none] (2) at (0.25, -0.5) {};
		\node [style=none] (3) at (0, 0.5) {};
	\end{pgfonlayer}
	\begin{pgfonlayer}{edgelayer}
		\draw (3.center) to (0);
		\draw (0) to (1.center);
		\draw (0) to (2.center);
	\end{pgfonlayer}
\end{tikzpicture}}
\endpgfgraphicnamed\right\rrbracket_{WX}=  %
\beginpgfgraphicnamed{diagrams//winzx}
\InputIfFileExists{diagrams//winzx.tikz}{}{\input{./figures/diagrams//winzx.tikz}}
\endpgfgraphicnamed, \]
    
     \[
 \left\llbracket%
\beginpgfgraphicnamed{diagrams//rgatewhite}
}
\endpgfgraphicnamed\right\rrbracket_{WX}=%
\beginpgfgraphicnamed{diagrams//rwphasenew}
\InputIfFileExists{diagrams//rwphasenew.tikz}{}{\input{./figures/diagrams//rwphasenew.tikz}}
\endpgfgraphicnamed, 
 \]

    \[ \llbracket D_1\otimes D_2  \rrbracket_{WX} =  \llbracket D_1  \rrbracket_{WX} \otimes  \llbracket  D_2  \rrbracket_{WX}, \quad 
 \llbracket D_1\circ D_2  \rrbracket_{WX} =  \llbracket D_1  \rrbracket_{WX} \circ  \llbracket  D_2  \rrbracket_{WX},
 \]
where $r=a_0+a_1e^{i\frac{\pi}{4}}+a_2e^{i\frac{2\pi}{4}}+a_3e^{i\frac{3\pi}{4}},~   a_j \in \mathbb Z[\frac{1}{2}], ~ j=0, 1, 2, 3$. Note that the representation of $a_j$ box is described in Lemma \ref{lem:lamb_tri_decomposition2}.

\begin{lemma}\label{wtoxpreservesemantics2}
Suppose $D$ is an arbitrary diagram in ZW-calculus. Then  $\llbracket \llbracket D \rrbracket_{WX}\rrbracket = \llbracket D \rrbracket$.
\end{lemma}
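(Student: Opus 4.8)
The plan is to proceed by structural induction on the ZW-diagram $D$, exploiting that both the translation $\llbracket\cdot\rrbracket_{WX}$ and the standard interpretation $\llbracket\cdot\rrbracket$ are strict monoidal functors. Every ZW-diagram is built from the generators appearing in the definition of $\llbracket\cdot\rrbracket_{WX}$ above (the empty diagram, identity, cap, cup, swap, the white spider, the black $\pi$ node, the crossing, the $W$ node, and the $r$-phase node) using sequential composition $\circ$ and parallel composition $\otimes$. Hence it suffices to establish the equation on the generators and then to verify that it is preserved under $\circ$ and $\otimes$.

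The inductive step is immediate. The defining clauses $\llbracket D_1 \otimes D_2\rrbracket_{WX} = \llbracket D_1\rrbracket_{WX}\otimes \llbracket D_2\rrbracket_{WX}$ and $\llbracket D_1\circ D_2\rrbracket_{WX} = \llbracket D_1\rrbracket_{WX}\circ \llbracket D_2\rrbracket_{WX}$ combine with the functoriality of $\llbracket\cdot\rrbracket$ to give
\[
\llbracket\llbracket D_1\otimes D_2\rrbracket_{WX}\rrbracket = \llbracket\llbracket D_1\rrbracket_{WX}\rrbracket\otimes\llbracket\llbracket D_2\rrbracket_{WX}\rrbracket = \llbracket D_1\rrbracket\otimes\llbracket D_2\rrbracket = \llbracket D_1\otimes D_2\rrbracket ,
\]
and analogously for $\circ$, where the middle equality is the induction hypothesis. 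Thus the whole statement reduces to the base cases.

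For the base cases I would check $\llbracket\llbracket g\rrbracket_{WX}\rrbracket = \llbracket g\rrbracket$ generator by generator. For the compact-closed and structural generators -- the empty diagram, identity, cap, cup and swap -- the translation is the identity on diagrams, so the two sides are literally equal. For the white spider, the black $\pi$ node, the $W$ node and the crossing, the translated ZX-diagrams are the standard ones, and each equation is a direct matrix computation comparing the two standard interpretations; these are routine.

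The only substantial case, and the main obstacle, is the $r$-phase node, where $r = a_0 + a_1 e^{i\pi/4} + a_2 e^{i2\pi/4} + a_3 e^{i3\pi/4}$ with $a_j\in\mathbb{Z}[\frac{1}{2}]$. Here the translation assembles $r$ from four $\lambda$-boxes realising the non-negative coefficients $a_j$ (whose Hilbert-space semantics is fixed via Lemma \ref{lem:lamb_tri_decomposition2}), the four $Z$-phases $e^{ik\pi/4}$, and the $W$-node addition gadget governed by rule (AD). I would expand $r$ term by term and verify that the Hilbert-space interpretation of this composite evaluates exactly to the ZW semantics of the $r$-phase node, using that the addition gadget computes $\lambda_1 e^{i\alpha}+\lambda_2 e^{i\beta}$ at the level of denotations. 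Since the $a_j$ lie in $\mathbb{Z}[\frac{1}{2}]$ rather than in $\mathbb{N}$, I would treat each coefficient by its integer part plus the binary expansion of its fractional part, exactly as in the proof of Lemma \ref{lem:lamb_tri_decomposition2}, so that the semantic equality holds for every admissible $r$. Essentially all the work lies in this one verification; the rest of the lemma is formal.
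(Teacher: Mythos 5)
Your proposal is correct, and it is precisely the routine argument the paper has in mind: the paper dismisses this lemma with ``The proof is easy,'' meaning exactly the structural induction over $\otimes$ and $\circ$ (trivial by monoidality of both $\llbracket\cdot\rrbracket_{WX}$ and $\llbracket\cdot\rrbracket$) followed by a generator-by-generator matrix check, with the $r$-phase node as the only case requiring real computation. One small refinement: since the $a_j\in\mathbb{Z}[\frac{1}{2}]$ may be negative, the translated diagram absorbs signs into the accompanying $Z$-phases (as the paper does implicitly in Lemma~\ref{interpretationreversible2}, rewriting $-\frac{1}{2}e^{i\frac{3\pi}{4}}$ as $\frac{1}{2}e^{-i\frac{\pi}{4}}$), so Lemma~\ref{lem:lamb_tri_decomposition2} with its non-negative binary expansion indeed suffices, as you use it.
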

The proof is easy.

\begin{lemma}\label{interpretationreversible2}
Suppose $D$ is an arbitrary diagram in ZX-calculus. Then  $ZX\vdash \llbracket \llbracket D \rrbracket_{XW}\rrbracket_{WX} =D$.
\end{lemma}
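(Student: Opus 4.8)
The plan is to leverage the compositional definition of both interpretations. Since $\llbracket \cdot \rrbracket_{XW}$ and $\llbracket \cdot \rrbracket_{WX}$ each commute with $\otimes$ and $\circ$ by construction, so does their composite $D \mapsto \llbracket \llbracket D \rrbracket_{XW} \rrbracket_{WX}$; moreover the relation $ZX \vdash (-) = (-)$ is a congruence with respect to $\otimes$ and $\circ$. Hence if the identity holds for two diagrams it holds for their tensor and their composite, and it suffices to establish $ZX \vdash \llbracket \llbracket g \rrbracket_{XW} \rrbracket_{WX} = g$ separately for each generator $g$ of the ZX-calculus: the spider $R_Z^{(n,m)}$, the phase gate $A$, the Hadamard $H$, the swap $\sigma$, the identity $\mathbb I$, the empty diagram $e$, the cap $C_a$, the cup $C_u$, the triangle $T$ and the $\lambda$ box $L$.

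The structural generators are disposed of at once. Reading off the two interpretation tables, $e$, $\mathbb I$, $C_a$, $C_u$ and $\sigma$ are each sent by $\llbracket \cdot \rrbracket_{XW}$ to their ZW counterparts and then returned unchanged by $\llbracket \cdot \rrbracket_{WX}$, so their round-trip is literally the identity. Likewise the green spider is sent to the white ZW spider and then back to the green spider, so this case needs no rewriting either.

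The real content lies in the four remaining generators $A$, $H$, $T$, $L$, whose images under $\llbracket \cdot \rrbracket_{XW}$ are genuine composite ZW diagrams. For each I would proceed in three steps: (i) expand the ZW image as a composite of ZW generators (white spiders, the black $\pi$ node, the $W$ node, crossings and $r$-phases); (ii) apply $\llbracket \cdot \rrbracket_{WX}$ generator by generator, obtaining an explicit ZX diagram; and (iii) rewrite that diagram back to the original generator using the rules of Figures \ref{figure1t}, \ref{figure2t} and \ref{figure0t}. I expect the Hadamard and phase-gate cases to close using the colour-change rule (H), the Euler decomposition (EU) and the spider and bialgebra rules (S1, B1, B2); the triangle case to close using the triangle rules of Figure \ref{figure2t} together with the decomposition (\ref{triangleslash}); and the $\lambda$ box to reduce, via Lemma \ref{lem:lamb_tri_decomposition2}, to a combination of triangles and $\tfrac{\pi}{4}$-phases that is then recombined by the addition rule (AD).

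The hard part will be the generators whose ZW image contains an $r$-phase node, since the back-translation of the $r$-phase is precisely the piece of the interpretation modified relative to \cite{ngwang}. Unfolding $r = a_0 + a_1 e^{i\pi/4} + a_2 e^{i 2\pi/4} + a_3 e^{i 3\pi/4}$ with $a_j \in \mathbb Z[\tfrac12]$ produces a sum of $\lambda$-boxes, each of which must be expanded by the binary-expansion argument of Lemma \ref{lem:lamb_tri_decomposition2} and then reassembled with (AD); verifying diagrammatically that this reassembly reproduces exactly the phase carried by $g$ is the delicate bookkeeping at the core of the argument. One subtlety worth stressing is that none of these equalities may be shortcut by observing that both sides share the same standard interpretation (Lemmas \ref{xtowpreservesemantics2} and \ref{wtoxpreservesemantics2}): that inference would presuppose the very completeness theorem towards which this lemma is a stepping stone, so each case must be exhibited as an explicit derivation within the $\tfrac{\pi}{4}$-fragment ZX-calculus.
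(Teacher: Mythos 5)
Your proposal is correct and follows essentially the same route as the paper: reduce to generators via compositionality, dispose of the six generators shared with the ZW-calculus immediately, and give explicit in-calculus derivations for the phase gate, Hadamard, $\lambda$ box and triangle, with the Hadamard's $r$-phase back-translation (the paper unfolds $\frac{\sqrt{2}-2}{2}=-1+\frac{1}{2}e^{i\frac{\pi}{4}}+0e^{i\frac{2\pi}{4}}+\frac{1}{2}e^{-i\frac{\pi}{4}}$, absorbing the negative coefficient into a shifted phase) being exactly the delicate point you single out. The only divergence is one of emphasis: the paper dismisses the triangle and $\lambda$-box round-trips as easy to check rather than invoking Lemma \ref{lem:lamb_tri_decomposition2}, and your closing remark that semantic preservation cannot be used as a shortcut is a correct reading of the lemma's role in the completeness proof.
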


\begin{proof}
By the construction of  $\llbracket \cdot  \rrbracket_{XW}$  and $\llbracket \cdot  \rrbracket_{WX}$, we only need to prove for $D$ as a generator of ZX-calculus.  The first six generators in ZX-calculus are the same as the  first six generators in ZW-calculus, so we just need to check for the last four generators in ZX-calculus, i.e., the green phase gate, the Hadamard gate, the $\lambda$ box and the triangle.

For the phase gate, we have 
\[
 \left\llbracket \left\llbracket%
\beginpgfgraphicnamed{diagrams//alphagate}
}
\endpgfgraphicnamed\right\rrbracket_{XW}\right\rrbracket_{WX}= \left\llbracket%
\beginpgfgraphicnamed{diagrams//alphagatewhite}
}
\endpgfgraphicnamed\right\rrbracket_{WX}\]
 \[
\beginpgfgraphicnamed{diagrams//alphadrvctnew}
\InputIfFileExists{diagrams//alphadrvctnew.tikz}{}{\input{./figures/diagrams//alphadrvctnew.tikz}}
\endpgfgraphicnamed
\]
For the Hadamard gate, we have

\[
  \left\llbracket \left\llbracket%
\beginpgfgraphicnamed{diagrams//HadaDecomSingleslt}
}
\endpgfgraphicnamed\right\rrbracket_{XW}\right\rrbracket_{WX}=  
  \left\llbracket  %
\beginpgfgraphicnamed{diagrams//Hadamardwhite}
\InputIfFileExists{diagrams//Hadamardwhite.tikz}{}{\input{./figures/diagrams//Hadamardwhite.tikz}}
\endpgfgraphicnamed  \right\rrbracket_{WX} =    %
\beginpgfgraphicnamed{diagrams//Hadascalar}
\InputIfFileExists{diagrams//Hadascalar.tikz}{}{\input{./figures/diagrams//Hadascalar.tikz}}
\endpgfgraphicnamed \left\llbracket  %
\beginpgfgraphicnamed{diagrams//Hadamardwhitescalar}
\begin{tikzpicture}
	\begin{pgfonlayer}{nodelayer}
		\node [style=wn] (0) at (0, 0) {};
		\node [style=wn] (1) at (0, 0.5) {};
		\node [style=wn] (2) at (0, -0.5) {};
		\node [style=none] (3) at (-0.5, 0) {$\frac{\sqrt{2}-2}{2}$};
	\end{pgfonlayer}
	\begin{pgfonlayer}{edgelayer}
		\draw (1) to (0);
		\draw (0) to (2);
	\end{pgfonlayer}
\end{tikzpicture}}
\endpgfgraphicnamed  \right\rrbracket_{WX}\]
  \[
\beginpgfgraphicnamed{diagrams//Hadamardxwx}
\InputIfFileExists{diagrams//Hadamardxwx.tikz}{}{\input{./figures/diagrams//Hadamardxwx.tikz}}
\endpgfgraphicnamed 
\]

Here we used  $\frac{\sqrt{2}-2}{2}=-1+\frac{1}{2}e^{i\frac{\pi}{4}}+0e^{i\frac{2\pi}{4}}-\frac{1}{2}e^{i\frac{3\pi}{4}}=-1+\frac{1}{2}e^{i\frac{\pi}{4}}+0e^{i\frac{2\pi}{4}}+\frac{1}{2}e^{i\frac{-\pi}{4}}.$

Finally, it is easy to check that 
\[
 \left\llbracket \left\llbracket%
\beginpgfgraphicnamed{diagrams//lambdabox}
}
\endpgfgraphicnamed\right\rrbracket_{XW}\right\rrbracket_{WX}=%
\beginpgfgraphicnamed{diagrams//lambdabox}
}
\endpgfgraphicnamed,\quad 
   \left\llbracket \left\llbracket%
\beginpgfgraphicnamed{diagrams//triangle}
}
\endpgfgraphicnamed\right\rrbracket_{XW}\right\rrbracket_{WX}=%
\beginpgfgraphicnamed{diagrams//triangle}
}
\endpgfgraphicnamed.
\]

\end{proof}

\section{Completeness}
\begin{proposition}\label{zwrulesholdinzx2}
If  $ZW\vdash D_1=D_2$, then  $ZX\vdash \llbracket D_1 \rrbracket_{WX} =\llbracket D_2 \rrbracket_{WX}$.

\end{proposition}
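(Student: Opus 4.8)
The plan is to argue by induction on the structure of the ZW-derivation witnessing $ZW \vdash D_1 = D_2$. Provable equality in the ZW-calculus is the congruence closure, under sequential composition $\circ$ and tensor $\otimes$, of the generating rewrite rules: it is the least equivalence relation that contains the axioms and is closed under reflexivity, symmetry, transitivity, and the two congruence rules. Since provable equality in the ZX-calculus is itself an equivalence relation closed under $\circ$ and $\otimes$, the reflexivity, symmetry and transitivity steps transfer immediately. For the congruence steps I would invoke that $\llbracket\cdot\rrbracket_{WX}$ is by construction functorial and monoidal, as recorded by its defining clauses $\llbracket D_1 \otimes D_2\rrbracket_{WX} = \llbracket D_1\rrbracket_{WX}\otimes\llbracket D_2\rrbracket_{WX}$ and $\llbracket D_1 \circ D_2\rrbracket_{WX} = \llbracket D_1\rrbracket_{WX}\circ\llbracket D_2\rrbracket_{WX}$: if a ZW-equality is obtained from $E_1 = E_2$ by composing or tensoring with a fixed diagram $F$, then the inductive hypothesis gives $ZX \vdash \llbracket E_1\rrbracket_{WX} = \llbracket E_2\rrbracket_{WX}$, and performing the same $\circ$ or $\otimes$ with $\llbracket F\rrbracket_{WX}$ inside the ZX-calculus yields the required equality.

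This reduces the statement to its base case: for each generating axiom $L = R$ of the ZW-calculus one must show $ZX \vdash \llbracket L\rrbracket_{WX} = \llbracket R\rrbracket_{WX}$. I would split these axioms into two groups according to whether they touch the $r$-phase generator. The first group consists of the axioms mentioning only the generators shared with the universal case — the white spider, the black $\pi$ node, the crossing and the $W$ node — whose images under $\llbracket\cdot\rrbracket_{WX}$ are exactly those fixed in \cite{ngwang}. For these, the translated equations are verbatim the ones already derived in the universal completeness proof, so they follow from the corresponding derivations of \cite{ngwang} together with the rules of Figures \ref{figure1t}, \ref{figure2t} and \ref{figure0t} (which are the $\frac{\pi}{4}$-restrictions of the universal rules, augmented by the replacement/derivable rules (TR4$'$), (TR5$'$), (TR10$'$) and (IV$'$)); nothing new is required here.

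The second, genuinely new, group consists of the ZW-axioms involving the $r$-phase generator, because this is precisely where the present interpretation departs from \cite{ngwang}. Here $r = a_0 + a_1 e^{i\pi/4} + a_2 e^{i2\pi/4} + a_3 e^{i3\pi/4}$ with $a_j \in \mathbb{Z}[\tfrac12]$ is sent to the ZX-diagram built from the four $a_j$-boxes, each expressed through the decomposition of Lemma \ref{lem:lamb_tri_decomposition2}. For the ring axioms governing the $r$-phase (its additive and multiplicative behaviour relative to the $W$ node), I would reduce the translated statements to the addition rule (AD), the $\lambda$-box rules (L1)–(L5) and (TR14), and the triangle rules, using that the simplified form of (AD) established earlier, together with its commutativity, matches exactly the bilinearity encoded by the $W$ node.

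The main obstacle is this second group: one must check, purely diagrammatically in ZX, that the modified translation of the $r$-node is compatible with the ZW ring structure, i.e. that $\llbracket\cdot\rrbracket_{WX}$ carries each ring-theoretic ZW axiom to a ZX-derivable equation. The verification is routine but laborious, and the delicate point is the bookkeeping of the binary expansions $a_j = \sum_k a_{j,k}\, 2^{-k}$ and their associated $\lambda = \tfrac12$ boxes, so that the repeated applications of (AD) terminate and reassemble into the intended diagram; this is exactly where Lemma \ref{lem:lamb_tri_decomposition2} and the reformulated addition rule do the work. Since every remaining ZW axiom is shared with the universal fragment, no further difficulties arise and the induction closes.
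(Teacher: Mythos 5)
Your proposal is correct and takes essentially the same route as the paper: reduce provable equality to the generating axioms via the functoriality and monoidality of $\llbracket \cdot \rrbracket_{WX}$, appeal to the universal completion \cite{ngwang} for every axiom not touching the $r$-phase generator, and verify the remaining phase axioms ($rng^{r,s}_{\times}$, $rng^{r,s}_{+}$, $nat^{r}_{c}$, $nat^{r}_{\varepsilon c}$, $ph^{r}$) diagrammatically using (AD), (TR13), (TR14) and the triangle rules. One small correction of emphasis: the paper's appendix never unfolds the binary expansions of the $a_j$ --- it keeps the components $c_k = a_k e^{i\frac{k\pi}{4}}$ atomic and moves them around with (TR13)/(TR14) and the simplified (AD), so the termination bookkeeping you flag as the delicate point belongs to Lemma \ref{lem:lamb_tri_decomposition2} (expressibility of the $\lambda$ box), not to the verification carried out for this proposition.
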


\begin{proof}
Since the derivation of equalities in ZW and ZX is made by rewriting rules,  we need only to prove that $ZX \vdash \left\llbracket D_1\right\rrbracket_{WX} = \left\llbracket D_2\right\rrbracket_{WX}$ where  $D_1=D_2$ is a rewriting rule of ZW-calculus. Most proofs of this proposition have been done in the case of universal completion of the ZX-calculus \cite{ngwang}, we only need to check  for the last 5 rules $rng^{r,s}_{\times}$,  $rng^{r,s}_{+}$, $nat^{r}_{c}$, $nat^{r}_{\varepsilon c}$, $ph^{r}$,  which involve white phases in the ZW-calculus for Clifford+T QM. The rules $nat^{r}_{\varepsilon c}$ and $ph^{r}$ are easy to check, we just deal with the rules  $rng^{r,s}_{\times}$,  $rng^{r,s}_{+}$ and $nat^{r}_{c}$ in the appendix.
\end{proof}

\begin{theorem}\label{maintheorem2}
The ZX-calculus is complete for Clifford+T QM:
if $\llbracket D_1 \rrbracket =\llbracket D_2 \rrbracket$, then $ZX\vdash D_1=D_2$.

\end{theorem}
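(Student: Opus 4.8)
The plan is to prove completeness by a round trip through the ZW-calculus, whose completeness over an arbitrary commutative ring---and in particular over $\mathbb{Z}[i,\frac{1}{\sqrt{2}}]$---is established in \cite{amar}. The two translations $\llbracket\cdot\rrbracket_{XW}$ and $\llbracket\cdot\rrbracket_{WX}$, together with the semantic-preservation lemmas (Lemma \ref{xtowpreservesemantics2} and Lemma \ref{wtoxpreservesemantics2}), the reversibility lemma (Lemma \ref{interpretationreversible2}), and the rule-transfer result (Proposition \ref{zwrulesholdinzx2}) supply all the ingredients; the theorem itself is essentially their composition.

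First I would take two ZX diagrams $D_1, D_2$ with $\llbracket D_1\rrbracket = \llbracket D_2\rrbracket$ and push them into the ZW-calculus, forming $\llbracket D_1\rrbracket_{XW}$ and $\llbracket D_2\rrbracket_{XW}$. By Lemma \ref{xtowpreservesemantics2} these have the same standard interpretation, since $\llbracket\llbracket D_i\rrbracket_{XW}\rrbracket = \llbracket D_i\rrbracket$. Because the generators of the $\frac{\pi}{4}$-fragment are sent to ZW diagrams whose $r$-phases lie in $\mathbb{Z}[i,\frac{1}{\sqrt{2}}]$ (cf. Proposition \ref{rings} and the explicit definition of $\llbracket\cdot\rrbracket_{XW}$), the two ZW diagrams remain inside the Clifford+T fragment of ZW. Hence the completeness of ZW over $\mathbb{Z}[i,\frac{1}{\sqrt{2}}]$ \cite{amar} applies and yields $ZW \vdash \llbracket D_1\rrbracket_{XW} = \llbracket D_2\rrbracket_{XW}$.

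Next I would transport this derivation back. By Proposition \ref{zwrulesholdinzx2}, a ZW-provable equality is mapped to a ZX-provable one under $\llbracket\cdot\rrbracket_{WX}$, giving $ZX \vdash \llbracket\llbracket D_1\rrbracket_{XW}\rrbracket_{WX} = \llbracket\llbracket D_2\rrbracket_{XW}\rrbracket_{WX}$. Finally, Lemma \ref{interpretationreversible2} collapses each side of this equation back to the original diagram, $ZX \vdash \llbracket\llbracket D_i\rrbracket_{XW}\rrbracket_{WX} = D_i$, so chaining the three provable equalities delivers $ZX \vdash D_1 = D_2$, as required.

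The conceptually hard work is not in this final assembly but in the lemmas on which it rests, all of which are already in place: establishing the reversibility $ZX\vdash\llbracket\llbracket D\rrbracket_{XW}\rrbracket_{WX}=D$ (Lemma \ref{interpretationreversible2}), which requires checking that each ZX generator survives the round trip up to provable equality, and verifying that every ZW rewrite rule---especially the five ring-dependent rules $rng^{r,s}_{\times}$, $rng^{r,s}_{+}$, $nat^{r}_{c}$, $nat^{r}_{\varepsilon c}$, $ph^{r}$---is validated in ZX (Proposition \ref{zwrulesholdinzx2}). The single point within the theorem proof itself that must be treated with care is the fragment-matching in the first step: one must confirm that $\llbracket D_i\rrbracket_{XW}$ never leaves the $\mathbb{Z}[i,\frac{1}{\sqrt{2}}]$-fragment of ZW, so that the restricted ZW completeness theorem of \cite{amar} is genuinely applicable rather than the version for the full ring $\mathbb{C}$.
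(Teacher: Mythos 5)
Your proof is correct and follows exactly the paper's own argument: apply Lemma \ref{xtowpreservesemantics2} to transfer semantic equality into ZW, invoke the ZW completeness theorem of \cite{amar} over $\mathbb{Z}[i,\frac{1}{\sqrt{2}}]$, pull the derivation back with Proposition \ref{zwrulesholdinzx2}, and close the loop with Lemma \ref{interpretationreversible2}. Your added remark on checking that $\llbracket D_i\rrbracket_{XW}$ stays within the $\mathbb{Z}[i,\frac{1}{\sqrt{2}}]$-fragment is a sensible precaution but not a departure from the paper's route.
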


\begin{proof}
Suppose $D_1,  D_2 \in ZX$ and  $\llbracket D_1 \rrbracket =\llbracket D_2 \rrbracket$. Then by lemma \ref{xtowpreservesemantics2},  $\llbracket \llbracket D_1 \rrbracket_{XW}\rrbracket = \llbracket D_1 \rrbracket= \llbracket D_2 \rrbracket=\llbracket \llbracket D_2 \rrbracket_{XW}\rrbracket $.  Thus by the completeness of ZW-calculus in any commutative ring \cite{amar},  $ZW\vdash \llbracket D_2 \rrbracket_{XW}=  \llbracket D_2 \rrbracket_{XW}$.  Now by proposition \ref{zwrulesholdinzx2},  $ZX\vdash \llbracket \llbracket D_1 \rrbracket_{XW}\rrbracket_{WX} =\llbracket \llbracket D_2 \rrbracket_{XW}\rrbracket_{WX}$.
Finally, by lemma \ref{interpretationreversible2},  $ZX\vdash D_1=D_2$.
\end{proof}


\section{Conclusion and further work}

In this paper, we give  a complete axiomatisation of the ZX-calculus for the Clifford+T QM based on our complete axiomatisation for the overall pure qubit QM \cite{ngwang} and the completeness theorem of the ZW-calculus \cite{amar}. We also show the features of our new generators in contrast to the complete axiomatisation for the Clifford+T QM shown in \cite{Emmanuel}.

A natural thing to do next would be applying the rules of this paper to the simplification of Clifford+T quantum circuits.

It is also interesting to incorporate the rules obtained here in the automated graph rewriting system Quantomatic  \cite{Quanto}. 

\section*{Acknowledgement}
 The authors would like to thank  Bob Coecke for the fruitful discussions and invaluable comments.



\section*{Appendix}

\begin{proposition}(ZW rule $nat^{r}_{c}$)~\newline \\
	\begin{equation}\label{wpscopy}
	ZX\vdash
	\left\llbracket~
	\input{diagrams/natrc_LHS.tikz}
	~\right\rrbracket_{WX}
	=
	\left\llbracket~
	\input{diagrams/natrc_RHS.tikz}
	~\right\rrbracket_{WX},
	\end{equation}
	 where $r=a_0+a_1e^{i\frac{\pi}{4}}+a_2e^{i\frac{2\pi}{4}}+a_3e^{i\frac{3\pi}{4}},   a_j \in \mathbb Z[\frac{1}{2}], ~ j=0, 1, 2, 3$.
\end{proposition}

\begin{proof}
	Let $c_k=a_ke^{i\frac{k\pi}{4}},  k=0, 1, 2, 3$. Then
	\begin{equation}\label{wphaseint}
\begin{array}{ll}
\left\llbracket~
	\begin{tikzpicture}
	\begin{pgfonlayer}{nodelayer}
		\node [style=none] (0) at (0, -0.5) {};
		\node [style=wn] (1) at (0, 0) {};
		\node [style=none] (2) at (0.25, 0) {$r$};
		\node [style=none] (3) at (0, 0.5) {};
	\end{pgfonlayer}
	\begin{pgfonlayer}{edgelayer}
		\draw (1) to (3.center);
		\draw (1) to (0.center);
	\end{pgfonlayer}
\end{tikzpicture}
	~\right\rrbracket_{WX}
	=%
\beginpgfgraphicnamed{diagrams//rwphasenew}
\InputIfFileExists{diagrams//rwphasenew.tikz}{}{\input{./figures/diagrams//rwphasenew.tikz}}
\endpgfgraphicnamed
	\stackrel{nat_w^w}{=}
\left\llbracket~
\input{diagrams/zwphasefuse.tikz}
~\right\rrbracket_{WX}&\vspace{0.5cm}\\
\stackrel{cut_z}{=}
\left\llbracket~
\input{diagrams/zwphasecopy22.tikz}
~\right\rrbracket_{WX}
\stackrel{ba_{zw}}{=}
\left\llbracket~
\input{diagrams/zwphasecopy33.tikz}
~\right\rrbracket_{WX}
\stackrel{cut_z}{=}
\left\llbracket~
\input{diagrams/zwphasecopy44.tikz}
~\right\rrbracket_{WX}
\end{array}
	\end{equation}


Thus
\begin{equation}
\begin{array}{ll}
\left\llbracket~
	\input{diagrams/natrc_RHS.tikz}
	~\right\rrbracket_{WX}
\stackrel{(\ref{wphaseint})}{=}
\left\llbracket~
\input{diagrams/zwphasecopy4.tikz}
~\right\rrbracket_{WX}
\stackrel{ba_{w}}{=}
\left\llbracket~
\input{diagrams/zwphasecopy5.tikz}
~\right\rrbracket_{WX}
&\vspace{0.5cm}\\
\stackrel{cut_w,sym_w^x, TR13, TR14}{=}
\left\llbracket~
\input{diagrams/zwphasecopy6.tikz}
~\right\rrbracket_{WX}
\stackrel{(\ref{wphaseint})}{=}
\left\llbracket~
	\input{diagrams/natrc_LHS.tikz}
	~\right\rrbracket_{WX}&
\end{array}
\end{equation}   
Note that all the ZW rules we applied here have been proved to be true as well under the interpretation $\llbracket \cdot \rrbracket_{WX}$.
\end{proof}

\begin{proposition}(ZW rule $rng^{r,s}_{+}$)~\newline \\
	\begin{equation}\label{zwplus}
	ZX\vdash
	\left\llbracket~
	\input{diagrams/rngrsp_LHS.tikz}
	~\right\rrbracket_{WX}
	=
	\left\llbracket~
	\begin{tikzpicture}
	\begin{pgfonlayer}{nodelayer}
		\node [style=none] (0) at (0, -1) {};
		\node [style=wn] (1) at (0, -0) {};
		\node [style=none] (2) at (0.5, -0) {$r+s$};
		\node [style=none] (3) at (0, 1) {};
	\end{pgfonlayer}
	\begin{pgfonlayer}{edgelayer}
		\draw (1) to (3.center);
		\draw (1) to (0.center);
	\end{pgfonlayer}
\end{tikzpicture}
	~\right\rrbracket_{WX},
	\end{equation}
 where $r=a_0+a_1e^{i\frac{\pi}{4}}+a_2e^{i\frac{2\pi}{4}}+a_3e^{i\frac{3\pi}{4}},~s=b_0+b_1e^{i\frac{\pi}{4}}+b_2e^{i\frac{2\pi}{4}}+b_3e^{i\frac{3\pi}{4}},   a_j, b_j \in \mathbb Z[\frac{1}{2}], ~ j=0, 1, 2, 3$.
\end{proposition}

\begin{proof}
Let $c_k=a_ke^{i\frac{k\pi}{4}}, d_k=b_ke^{i\frac{k\pi}{4}}, k=0, 1, 2, 3$. Then we have 
\begin{equation}
\begin{array}{ll}
\left\llbracket~
	\input{diagrams/rngrsp_LHS.tikz}
	~\right\rrbracket_{WX}
\stackrel{(\ref{wphaseint})}{=}
\left\llbracket~
	\input{diagrams/zwadd.tikz}
	~\right\rrbracket_{WX}
\stackrel{nat_w^w}{=}
\left\llbracket~
	\input{diagrams/zwadd1.tikz}
	~\right\rrbracket_{WX}&\vspace{0.5cm}\\
\stackrel{nat_w^w}{=}
\left\llbracket~
	\input{diagrams/zwadd2.tikz}
	~\right\rrbracket_{WX}
\stackrel{(AD)}{=}
\left\llbracket~
	\input{diagrams/zwadd3.tikz}
	~\right\rrbracket_{WX}
	\stackrel{(\ref{wphaseint})}{=} 
	\left\llbracket~
	
	~\right\rrbracket_{WX}&	
\end{array}
\end{equation}
Note that all the ZW rules we applied here have been proved to be true as well under the interpretation $\llbracket \cdot \rrbracket_{WX}$.

\end{proof}

\begin{proposition}\label{zwphasefusionclt}(ZW rule $rng^{r,s}_{\times}$)~\newline \\
	\begin{equation}
	ZX\vdash
	\left\llbracket~
	\input{diagrams/rngrsx_LHS.tikz}
	~\right\rrbracket_{WX}
	=
	\left\llbracket~
	\begin{tikzpicture}
	\begin{pgfonlayer}{nodelayer}
		\node [style=none] (0) at (0, 0.75) {};
		\node [style=none] (1) at (0, -0.7499999) {};
		\node [style=none] (2) at (0.5, -0) {$rs$};
		\node [style=wn] (3) at (0, -0) {};
	\end{pgfonlayer}
	\begin{pgfonlayer}{edgelayer}
		\draw (3) to (0.center);
		\draw (3) to (1.center);
	\end{pgfonlayer}
\end{tikzpicture}
	~\right\rrbracket_{WX},
	\end{equation}
	 where $r=a_0+a_1e^{i\frac{\pi}{4}}+a_2e^{i\frac{2\pi}{4}}+a_3e^{i\frac{3\pi}{4}},~s=b_0+b_1e^{i\frac{\pi}{4}}+b_2e^{i\frac{2\pi}{4}}+b_3e^{i\frac{3\pi}{4}},   a_j, b_j \in \mathbb Z[\frac{1}{2}], ~ j=0, 1, 2, 3$.
\end{proposition}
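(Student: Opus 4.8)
The plan is to follow the same template used for the two preceding propositions, $nat^{r}_{c}$ and $rng^{r,s}_{+}$. First I would set $c_k = a_k e^{ik\pi/4}$ and $d_l = b_l e^{il\pi/4}$ for $k,l=0,1,2,3$, so that $r=\sum_k c_k$ and $s=\sum_l d_l$, and apply the decomposition (\ref{wphaseint}) to rewrite both the $r$-phase and the $s$-phase white node as black $W$-node trees branching to the four elementary white phases $c_0,\dots,c_3$ and $d_0,\dots,d_3$ respectively. Recall that in this translation the black $W$-node realises the additive structure, so each such tree genuinely encodes $r$, resp. $s$, as the sum of its four basis components.

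With the two trees stacked along the connecting wire, I would then push the white node of the upper tree through the black $W$-node of the lower one using the bialgebra law $ba_{zw}$ together with $nat_w^w$ and the $cut$ rules, exactly as in the distribution pattern already exploited in the proof of $rng^{r,s}_{+}$. Since the white (Z) structure is multiplicative and the black ($W$) structure is additive, the bialgebra law distributes the single composition into the sixteen pairwise branches carrying the elementary products $c_k d_l$. Crucially this uses only the bialgebra and naturality rules, not the multiplication rule under proof, so there is no circularity.

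Next, each elementary product is $c_k d_l = a_k b_l\, e^{i(k+l)\pi/4}$, and I would evaluate it inside the $\frac{\pi}{4}$-fragment ZX-calculus: the pure angles fuse by the green spider rule (S1) as $e^{ik\pi/4}e^{il\pi/4}=e^{i(k+l)\pi/4}$, while the dyadic moduli multiply by the $\lambda$-box rule (L4), giving $a_k b_l \in \mathbb{Z}[\tfrac12]$. For the six cross terms with $k+l\ge 4$ I would apply the reduction $e^{i(k+l)\pi/4}=-\,e^{i(k+l-4)\pi/4}$, folding the excess $e^{i\pi}$ into a sign (a $\pi$-phase) on the modulus so that the term re-enters the standard basis $\{1,e^{i\pi/4},e^{i2\pi/4},e^{i3\pi/4}\}$. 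Finally I would regroup the sixteen terms into the four residue classes $k+l\equiv 0,1,2,3 \pmod 4$ and collapse each class with the addition rule (AD), recovering precisely the four coefficients of $rs$; reading (\ref{wphaseint}) backwards on the resulting tree then yields the single $rs$-phase white node.

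The step I expect to be the main obstacle is the bookkeeping of this carry/sign reduction: tracking the induced $\pi$-phases and the negative dyadic moduli for the $k+l\ge 4$ terms while keeping them aligned with the correct residue class and collapsing everything through (AD), i.e. faithfully matching the combinatorics of multiplication in $\mathbb{Z}[\tfrac12,e^{i\pi/4}]$ to the diagram rewriting. No single rewrite should be hard, but the organisation of the sixteen terms into the four collected sums is where errors would creep in. As in the previous two proofs, every ZW rule invoked ($ba_{zw}$, $nat_w^w$, the $cut$ rules, $ph^{r}$) has already been shown to hold under $\llbracket \cdot \rrbracket_{WX}$, so the entire derivation stays inside the $\frac{\pi}{4}$-fragment ZX-calculus.
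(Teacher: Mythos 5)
Your proposal matches the paper's proof essentially step for step: the paper likewise expands both white phases via (\ref{wphaseint}), distributes with the bialgebra law $ba_w$, then copies and fuses the branches using TR13, TR14 and $nat_w^w$ to obtain the sixteen elementary products $c_kd_l$, and finally collapses the resulting sum back to a single $rs$-phase. The only cosmetic difference is in the last step: where you plan to regroup the sixteen terms into residue classes mod $4$ and collapse them by hand with (AD) (flagging the $k+l\geq 4$ carry/sign bookkeeping as the main obstacle), the paper simply invokes the just-proven ZW addition rule (\ref{zwplus}) ($rng^{r,s}_{+}$), which is stated for arbitrary ring elements of $\mathbb{Z}[i,\frac{1}{\sqrt{2}}]$ and therefore absorbs that bookkeeping wholesale.
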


\begin{proof}
Let $c_k=a_ke^{i\frac{k\pi}{4}}, d_k=b_ke^{i\frac{k\pi}{4}}, k=0, 1, 2, 3$. Then by (\ref{wphaseint}) we have 
\begin{equation}
\left\llbracket~
	
	~\right\rrbracket_{WX}
	=
\left\llbracket~
\input{diagrams/zwphasecopy44.tikz}
~\right\rrbracket_{WX},~~
\left\llbracket~
	\begin{tikzpicture}
	\begin{pgfonlayer}{nodelayer}
		\node [style=wn] (0) at (0, 0) {};
		\node [style=none] (1) at (0.25, 0) {$s$};
		\node [style=none] (2) at (0, 0.5) {};
		\node [style=none] (3) at (0, -0.5) {};
	\end{pgfonlayer}
	\begin{pgfonlayer}{edgelayer}
		\draw (0) to (2.center);
		\draw (0) to (3.center);
	\end{pgfonlayer}
\end{tikzpicture}
	~\right\rrbracket_{WX}
	=
\left\llbracket~
\input{diagrams/zwphasefuse22.tikz}
~\right\rrbracket_{WX}
	\end{equation}   
Therefore,
\begin{equation}
\begin{array}{ll}
	\left\llbracket~
	\input{diagrams/rngrsx_LHS.tikz}
	~\right\rrbracket_{WX}
	\stackrel{(\ref{wphaseint})}{=} 
	\left\llbracket~
	\input{diagrams/zwphasefuse33.tikz}
	~\right\rrbracket_{WX}
	\stackrel{ba_w}{=}
	\left\llbracket~
	\input{diagrams/zwphasefuse44.tikz}
	~\right\rrbracket_{WX}&\vspace{0.5cm}\\
	\stackrel{TR13,TR14,nat_w^w}{=}
	\left\llbracket~
	\input{diagrams/zwphasefuse55.tikz}
	~\right\rrbracket_{WX}&\vspace{0.5cm}\\
	\stackrel{TR13,TR14,nat_w^w}{=}
	\left\llbracket~
	\input{diagrams/zwphasefuse66.tikz}
	~\right\rrbracket_{WX}&\vspace{0.5cm}\\
	\stackrel{(\ref{zwplus})}{=}
	\left\llbracket~
	
	~\right\rrbracket_{WX}&
	\end{array}
	\end{equation}
Note that all the ZW rules we applied here have been proved to be true as well under the interpretation $\llbracket \cdot \rrbracket_{WX}$.

\end{proof}


\begin{thebibliography}{99}
\bibitem{CoeckeDuncan} B. Coecke,  R. Duncan (2011), Interacting quantum
observables: Categorical algebra and diagrammatics. New Journal of Physics 13, p.
043016.

\bibitem{Coeckebk} Coecke, B., Kissinger, A.,  Picturing Quantum Processes: A First Course in Quantum Theory and Diagrammatic Reasoning, Cambridge University Press (2017).


\bibitem{Miriam1ct}Miriam Backens, The ZX-calculus is complete for the single-qubit Clifford+T group, Electronic Proceedings in Theoretical Computer Science, 2014.

\bibitem{Emmanuel} Emmanuel Jeandel, Simon Perdrix, Renaud Vilmart, A Complete Axiomatisation of the ZX-Calculus for Clifford+T Quantum Mechanics, arXiv:1705.11151

\bibitem{jpvw}Emmanuel Jeandel, Simon Perdrix, Renaud Vilmart, Quanlong Wang,
ZX-Calculus: Cyclotomic Supplementarity and Incompleteness for Clifford+T Quantum Mechanics. MFCS 2017: 11:1-11:13

\bibitem{ngwang}  Kang Feng Ng and Quanlong Wang, A universal completion of the ZX-calculus. 2017. arXiv:1706.09877.

\bibitem{amar} Amar Hadzihasanovic, The algebra of entanglement and the geometry of composition, PhD thesis, 	arXiv:1709.08086
\bibitem{cqmwiki} http://cqm.wikidot.com/zx-completeness

\bibitem{coeckewang} Bob Coecke, Quanlong Wang, On Completeness of the ZX-calculus for 2-qubit Clifford+T Quantum Circuits, in preparation. 

\bibitem{Nielsen}M. A. Nielsen and I. L. Chuang, Quantum Computation and Quantum Information, Cambridge University Press (2000).

\bibitem{Quanto} Quantomatic. https://sites.google.com/site/quantomatic/

\end{thebibliography}
\end{document}